   \newcommand{\InSODAVer}[1]{}%
   \newcommand{\InFullVer}[1]{#1}%
   \newcommand{\InSODAVer}[1]{#1}%
   \newcommand{\InFullVer}[1]{}%
\numberwithin{figure}{section}%
\numberwithin{table}{section}%
\numberwithin{equation}{section}%
  \theoremstyle{plain}%
  \newtheorem{theorem}{Theorem}[section]
  \newtheorem{lemma}[theorem]{Lemma}%
  \newtheorem{corollary}[theorem]{Corollary}
   \newtheorem{problem}{Problem}[section]%
   \newtheorem{remark}[theorem]{Remark}
\newcommand{\HLinkShort}[2]{\hyperref[#2]{#1\ref*{#2}}}
\newcommand{\HLink}[2]{\hyperref[#2]{#1~\ref*{#2}}}
\newcommand{\HLinkPage}[2]{\hyperref[#2]{#1~\ref*{#2}%
      $_\text{p\pageref{#2}}$}}
\newcommand{\HLinkPageOnly}[1]{\hyperref[#1]{Page~\refpage*{#1}%
      $_\text{p\pageref{#1}}$}}
\newcommand{\HLinkSuffix}[3]{\hyperref[#2]{#1\ref*{#2}{#3}}}
\newcommand{\HLinkPageSuffix}[3]{\hyperref[#2]{#1\ref*{#2}%
      #3$_\text{p\pageref{#2}}$}}
\newcommand{\algolab}[1]{\label{algo:#1}}
\newcommand{\algoref}[1]{\HLink{Algorithm}{algo:#1}}%
\newcommand{\figlab}[1]{\label{fig:#1}}
\newcommand{\figref}[1]{\HLink{Figure}{fig:#1}}
\providecommand{\lemlab}[1]{\label{xlemma:#1}}
\renewcommand{\lemlab}[1]{\label{xlemma:#1}}
\newcommand{\lemref}[1]{\HLink{Lemma}{xlemma:#1}}%
\newcommand{\corlab}[1]{\label{corollary:#1}}
\newcommand{\corref}[1]{\HLink{Corollary}{corollary:#1}}%
\newcommand{\thmlab}[1]{{\label{theo:#1}}}
\newcommand{\thmref}[1]{\HLink{Theorem}{theo:#1}}
\newcommand{\seclab}[1]{\label{sec:#1}}
\newcommand{\secref}[1]{\HLink{Section}{sec:#1}}
\providecommand{\eqlab}[1]{}%
\renewcommand{\eqlab}[1]{\label{equation:#1}}
\newcommand{\myqedsymbol}{\rule{2mm}{2mm}}
\theoremstyle{nonumberplain} \theoremseparator{}
\newtheorem{proof}{Proof:}
\newcommand{\etal}{\textit{et~al.}\xspace}
\newcommand{\eps}{{\varepsilon}}%
\newcommand{\atgen}{\symbol{'100}}
\newcommand{\BenThanks}[1]{\thanks{Department of Computer Science;
      University of Texas at Dallas; 
      {\tt benjamin.raichel\atgen{}utdallas.edu}; {\tt
         \url{http://utdallas.edu/\string~benjamin.raichel}.} #1}}
\newcommand{\ChenglinThanks}[1]{\thanks{Department of Computer Science;
      University of Texas at Dallas; 
      {\tt cxf160130\atgen{}utdallas.edu}. #1}}
\newcommand{\GregoryThanks}[1]{\thanks{Department of Computer Science;
      University of Texas at Dallas; 
      {\tt greg.vanbuskirk\atgen{}utdallas.edu}. #1}}
\newcommand{\mvd}{\textsf{MVD}\xspace}
\newcommand{\mvdd}{\textsf{MVDD}\xspace}
\newcommand{\mvid}{\textsf{MVID}\xspace}
\newcommand{\gmd}{\textsf{GMVD}\xspace}
\newcommand{\gmvd}{\textsf{GMVD}\xspace}
\newcommand{\gmvdd}{\textsf{GMVDD}\xspace}
\newcommand{\gmvid}{\textsf{GMVID}\xspace}
\newcommand{\mcut}{\textsf{MULTICUT}\xspace}
\newcommand{\lcut}{\textsf{LB-CUT}\xspace}
\newcommand{\csp}{\mathsf{csp}}
\newcommand{\remove}[1]{}
\begin{document}

\title{Metric Violation Distance: Revisited and Extended}

\author{%
Chenglin Fan\ChenglinThanks{Work on this paper was partially
      supported by NSF CRII Award 1566137 and CAREER Award 1750780.}
\and
Benjamin Raichel\BenThanks{Work on this paper was partially
      supported by NSF CRII Award 1566137 and CAREER Award 1750780.}
      \and
Gregory Van Buskirk\GregoryThanks{Work on this paper was partially
      supported by NSF CRII Award 1566137 and CAREER Award 1750780.}   
}

\date{}

\maketitle


\begin{abstract}
Metric data plays an important role in various settings such as metric-based indexing, clustering, classification, and approximation algorithms in general. Due to measurement error, noise, or an inability to completely gather all the data, a collection of distances may not satisfy the basic metric requirements, most notably the triangle inequality. Thus last year the authors introduced the Metric Violation Distance (\mvd) problem \cite{frv-mvdha-18}, where the input is an undirected and positively-weighted complete graph, and the goal is to identify a minimum cardinality subset of edges whose weights can be modified such that the resulting graph is its own metric completion. This problem was shown to be APX-hard, and moreover an $O(OPT^{1/3})$-approximation was shown, where $OPT$ is the size of the optimal solution.

In this paper we introduce the \emph{Generalized Metric Violation Distance} (\gmvd) problem, where the goal is the same, but the input graph is no longer required to be complete. For \gmvd we prove stronger hardness results, and provide a significantly faster approximation algorithm with an improved approximation guarantee. In particular, we give an approximation-preserving reduction from the well studied \mcut problem, which is hard to approximate within any constant factor assuming the Unique Games Conjecture. Our approximation factor depends on \emph{deficit} values, which for a given cycle is the largest single edge weight minus the sum of the weights of all its other edges. Note that no cycle has positive deficit in a metric complete graph. We give an $O(c \log n)$-approximation algorithm for \gmvd, where $c$ is the number of distinct positive cycle deficit values in the input graph.
\end{abstract}

\thispagestyle{empty}
\newpage
\setcounter{page}{1}


\section{Introduction}

Given a large collection of data points for which there is some underlying notion of ``distance'' between pairs of points, 
one naturally may wish to perform any number of computational tasks over the data by using these distances (e.g.\ clustering).  
The ability to perform these tasks highly depends on the structure these distances obey. 
There are many settings where the underlying distances arise from a metric space or are at least well modeled by one. 
Such cases are fortuitous, as certain tasks become provably easier over metric data (e.g.\ approximating the optimal TSP tour), 
and moreover they allow us to use a number of computational tools such as metric embeddings. 

In this paper we consider the \emph{General Metric Violation Distance} (\gmvd) problem, 
where the input is an undirected and positively-weighted graph, and the goal is to identify a minimum cardinality subset of edges whose weights can be modified such that 
in the resulting graph all edge weights are the same as those in the metric completion of the graph, i.e.\ each edge is its own shortest path. 
Thus viewing the vertices as data points and edge weights as distances, \gmvd models the problem of finding the nearest metric, where nearest is defined by the minimum number of distances values that need to be altered.
For non-metric data sets, when this nearest metric is close, finding it allows one to take advantage of all the benefits of metric spaces.
As an alternative motivation, consider the case where the underlying distance function is truly metric, though the process by which distances where obtained occasionally produced measurement errors.  
In this case an algorithm for \gmvd serves as an approach to recover the true distance function.

Last year the authors introduced the Metric Violation Distance (\mvd) problem \cite{frv-mvdha-18}, which is a restriction of \gmvd to instances where the input graph is complete. 
Thus \mvd models the situation where one is given complete (though possibly erroneous) distance information. 
However, such complete information may not always be available (e.g.\ when measuring distances is expensive).
By not requiring a complete graph, \gmvd on the other hand can effectively model such incomplete data situations.
Moreover, by considering the more general \gmvd problem, in this paper we are able to give stronger hardness results, 
and to provide a faster approximation algorithm with an improved approximation factor, when compared to what was shown in \cite{frv-mvdha-18} for \mvd.

\paragraph{Related Work.}
Naturally, the most relevant previous work is \cite{frv-mvdha-18}, where the authors introduced the \mvd problem, 
which as described above is a special case of the \gmvd problem considered in this paper. 
For \mvd the authors gave an approximation preserving reduction from Vertex Cover, hence showing \mvd is APX-hard, 
and moreover is hard to approximate within a factor of $2$ assuming the Unique Games Conjecture (UGC) \cite{Khot02a}. 
The authors then provided an $O(OPT^{1/3})$-approximation, where $OPT$ is the size of the optimal solution.
The running time of this algorithm is $\Omega(n^6)$, as it enumerates all cycles of length $\leq 6$, 
and so the algorithm may not be practical from an implementation standpoint. 
The same hardness and approximation results were also shown for \mvid, 
which is the variant where edge weights are only allowed to increase.
The problem is polynomial time solvable when weights are only allowed to decrease.

\gmvd is also related to a large number of other previously studied problems.  
A short list includes:
metric nearness, seeking the metric minimizing the sum of distance value changes \cite{BrickellDST08}; 
metric embedding with outliers, seeking the fewest points whose removal creates a metric \cite{sww-meo-17}; 
matrix completion, seeking to fill missing matrix entries to produce a low rank \cite{cr-emcco-12}; and many more.  
We refer the reader to \cite{frv-mvdha-18} for a more detailed discussion of these and other problems.  

Instead here we focus on the deep connections to certain cutting problems, which underly several results in this paper, 
and which were not discussed in \cite{frv-mvdha-18}. 
In particular, our problem is closely related to \mcut 
where, given a weighted graph $G$ with $k$ marked $(s_i,t_i)$ vertex pairs, the goal is to find a minimum
weight subset of edges $S$ such that there is no path in $G\setminus S$ between $s_i$ and $t_i$ for any $1\leq i\leq k$. 
\mcut has been extensively studied, both for directed and undirected graphs.
For undirected graphs, the problem captures vertex cover even when $G$ is a tree and hence is APX-Hard. Moreover, assuming UGC there is no constant factor approximation \cite{ChawlaKKRS06}.
In general, the best known approximation factor is $O(\log k)$~\cite{GargVY96}, 
which improves to an $O(r)$-approximation when $G$ excludes $K_r$ as a minor \cite{AbrahamGGNT14}.
%
For directed graphs the problem is more challenging. It is hard to approximate within $2^{\Omega(\log ^{1-\epsilon} n)}$~\cite{ChuzhoyK09} assuming NP$\neq$ZPP, 
and the best approximation factor known is $\tilde{O}(n^{11/23})$~\cite{AgarwalAC07}, where $n$ is number of vertices in $G$. 
Significantly improved approximations are known when the demand graph (determined by viewing the $(s_i,t_i)$ pairs as edges) excludes certain induced subgraphs~\cite{ChekuriM17}.
%
%

Another closely related problem is Length Bounded Cut (\lcut), where given a value $L$ and an $(s,t)$ pair in a graph $G$,  
the goal is to delete the minimum number of edges such that there is no path between $s$ and $t$ with length $\leq L$. 
For \lcut the best known approximation factor is $O (\min \{L,n^2/L^2,\sqrt{m}\})$~\cite{BaierEHKKPSS10}, when $G$ has $n$ vertices and $m$ edges, 
where the factor $L$-approximation follows by simply repeatedly computing and removing minimum length $s$ to $t$ paths until $d(s,t)>L$.
For any fixed $L$, Lee~\cite{l-ihcifp-17} showed that it is hard to approximate within a factor of $\Omega(\sqrt{L})$ in undirected graphs, 
and within a factor of $\Omega(L)$ for directed graphs. (Kuhnle \etal~\cite{KuhnleCT18} showed a similar result for the directed case of Length Bounded Multicut assuming NP$\neq$BPP. 
Note the $L$-approximation of \cite{BaierEHKKPSS10} also applies to Length Bounded Multicut.)

\paragraph{Our Results.}
In this paper we introduce the \gmvd problem. 
This a generalization of the \mvd problem introduced last year by the authors, to allow missing edges, 
thus modeling the scenario of incomplete data.  
In addition to defining the \gmvd problem, we prove the following results.
\begin{compactitem}
 \item In \secref{cover} we prove a key structural result for the \gmvd problem. 
       Define an unbalanced cycle to be a cycle whose largest edge weight 
       is larger than the sum of the weights of all its other edges. 
       Then we argue that in order for a subset of edges to be a valid solution to \gmvd, 
       it suffices for the subset to cover all unbalanced cycles. 
       As this condition is also trivially necessary, 
       it gives a complete characterization of the \gmvd solution set. 
 \item In \secref{hard} we give polynomial-time approximation-preserving reductions from \mcut and \lcut to \gmvd.  
       This connection to the well studied \mcut problem is interesting in its own right, though it also implies 
       that \gmvd is NP-hard, and moreover cannot be approximated within any constant factor assuming UGC.
       Also recall that in general the best known approximation factor for \mcut is $O(\log n)$.
       Our reduction from \lcut implies that, for any fixed $L$, the set of instances of \gmvd with maximum edge weight $L$
       (and minimum weight 1) are hard to approximate within a factor of $\Omega(\sqrt{L})$.
 \item In \secref{approx} we give an approximation algorithm for \gmvd.  
       Our approximation factor depends on \emph{deficit} values, which for a given cycle is the largest single edge weight minus the sum of the weights of all its other edges. 
       Note that the unbalanced cycles are precisely those with positive deficit.
       We give an $O(c \log n)$-approximation algorithm for \gmvd, where $c$ is the number of distinct positive cycle deficit values in the input graph.
       Note there are several natural cases when $c$ is small: when the number of unbalanced cycles is small; 
       or for integer edge weights when all unbalanced cycles are within a bounded amount from being balanced; or alternatively when the maximum edge weight is bounded.
       Note also that our reductions in \secref{hard} suggest that the terms in our approximation factor may each be necessary in some form.  
       In particular, for the $\log n$ term recall that we reduce from \mcut, whose best known approximation factor is $O(\log n)$.   
       Moreover, for integer weights, if the maximum edge weight is $L$ then $c\leq L$, 
       and so our reduction from \lcut implies such instances are hard to approximate within a factor of $\Omega(\sqrt{c})$.
       
       Our algorithm runs in $O((n^3+m^2)\cdot OPT\cdot c \log n)$ time, and is practical in the sense that it only relies on basic counting and shortest path computations. 
       It should be noted that our new algorithm also work for \mvd, as it is a special case of \gmvd. 
       Conversely, it is not obvious that the $OPT^{1/3}$-approximation in~\cite{frv-mvdha-18} for \mvd would apply to \gmvd, 
       and moreover from a running time perspective, that algorithm must enumerate all cycles of length $\leq 6$.
\end{compactitem}
As was done in~\cite{frv-mvdha-18} for \mvd, throughout the paper we also consider the variant of \gmvd where edge weights are only allowed to increase, denoted \gmvid.
For this problem show the same hardness and approximation results described above for \gmvd. (We also get a similar sufficient condition in \secref{cover}, 
except that cycles must be covered with edges other than their largest weight edge.)
In fact, we argue in \secref{hard} that there is a polynomial time reduction from \gmvid to \gmvd. 
It should be noted, however, that it is not clear such a reduction from \mvid to \mvd holds.


\section{Preliminaries}\seclab{prelims}

Define a \emph{dissimilarity graph} as any complete, undirected, and positively-weighted graph, $G=(V,E)$. 
Throughout we use $w(e)$ to denote the input weight of any edge $e\in E$, and for any subset $F\subseteq E$ we let $w(F)$ denote $\sum_{e\in F} w(e)$.
We write $n=|V|$ and $m=|E|$, and for simplicity we assume $G$ is connected and so we always have $n-1\leq m$.
We say a dissimilarity graph $G$ is a \emph{metric graph} if it is its own metric completion, 
i.e.\ the weight $w(e)$ on each edge $e\in E$ is the shortest path distance between its endpoints.
The following problem was studied in \cite{frv-mvdha-18}.

\begin{problem}\textsc{(Metric Violation Distance (\mvd))}
Given a dissimilarity graph $G$, compute a minimum size set $S$ of edges whose weights can be modified to convert $G$ into a metric graph.
\end{problem} 

As discussed in \cite{frv-mvdha-18}, the \mvd problem can be equivalently formulated in terms of distance matrices, though for simplicity here we stick to the graph formulation.\footnote{
In \cite{frv-mvdha-18} the \mvd problem defined here is actually referred to as \gmvd, where the $G$ signifies the `graph' variant.  Instead here we use \gmvd to refer to a more `general' problem.}
In \cite{frv-mvdha-18}, two restricted variants \mvdd and \mvid were also considered, where edge weights are only allowed to be decreased and increased, respectively.

In this paper we consider a more general version of the above problems, where the graph $G=(V,E)$ is undirected and positively weighted, though is not necessarily complete. 
Define a \emph{general dissimilarity graph} to be any undirected, and positively-weighted graph, and a \emph{general metric graph} to be any general dissimilarity graph, $G=(V,E)$,
such that the weight $w(e)$ on each edge $e\in E$ is the weight of $e$ in the metric completion of $G$, i.e.\ it is the shortest path distance between its endpoints.
(The word `general' is sometimes dropped if understood from the context.)


\begin{problem}\textsc{(General Metric Violation Distance (\gmvd))}
Given a general dissimilarity graph $G$, compute a minimum size set $S$ of edges whose weights can be modified to convert $G$ into a general metric graph.
\end{problem} 
Again, we define two restricted variants \gmvdd and \gmvid, where edge weights are only allowed to be decreased and increased, respectively.
In \cite{frv-mvdha-18} it was argued that the optimal solution to \mvdd is precisely the set of edges whose weight is larger than the shortest path distance between its endpoints. 
Hence \mvdd is polynomial time solvable by computing the all pairs shortest paths distances, and it is not hard to see this similarly holds for \gmvdd. 
Thus for the remainder of the paper we only consider the \gmvd and \gmvid problems.


\subsection*{Covering Cycles.}
Given an undirected graph $G=(V,E)$, a subgraph $C=(V',E')$ is called a $k$-cycle if $|V'|=|E'|=k$, and the subgraph is connected with every vertex having degree exactly $2$. 
We often overload this notation and use $C$ to denote either the cyclically ordered list of vertices or edges from this subgraph. 
We also often write $C\setminus e$ to denote the set of edges of $C$ after removing the edge $e$, and $\pi(C\setminus e)$ to denote the corresponding induced path between the endpoints of $e$.
Call any edge from $G$ connecting two vertices which are non-adjacent in a given cycle, a \emph{chord} of that cycle.
Given a $k$-cycle in $G$, if the weight of a single edge is strictly larger than the sum of the weights of the other edges in the cycle, we say it is an \emph{unbalanced $k$-cycle}. 
For a given unbalanced $k$-cycle, call its largest weight edge its \emph{top edge} and the other edges of the cycle the \emph{non-top} edges.
Let the \emph{deficit} of a cycle $C$, denoted $\delta(C)$, be equal to the weight of its top edge minus the sum of the weights of all other edges in $C$.  
Similarly, let $\delta(G)$ denote the maximum value of $\delta(C)$ over all cycles.
Note the set of unbalanced cycles is equivalently the set of cycles with strictly positive deficit.

We define three notions of covering unbalanced cycles.  
Specifically, let $\mathbb{C}$ be any collection of unbalanced cycles from a positively-weighted graph $G=(V,E)$.
An edge subset $F\subseteq E$ is a 
\begin{inparaenum}[(i)]
\item \emph{regular cover}, 
\item \emph{non-top cover}, or
\item \emph{top cover}
\end{inparaenum}
of $\mathbb{C}$, if $F$ contains at least one 
\begin{inparaenum}[(i)]
\item edge,
\item non-top edge, or
\item top edge
\end{inparaenum}
of every unbalanced cycle in $\mathbb{C}$.  
In particular, if $\mathbb{C}$ is the set of \emph{all} unbalanced cycles in $G$ then we say $F$ 
\begin{inparaenum}[(i)]
(i) regular covers, 
(ii) non-top covers, or
(iii) top covers
\end{inparaenum}
$G$, respectively.

It is not hard to see that in order for a dissimilarity graph to be a metric graph, there cannot be any unbalanced cycles, 
and the same is true for general dissimilarity graphs being general metric graphs.
(See \cite{frv-mvdha-18} for a proof for dissimilarity graphs, which trivially extends to the general case.)
Thus a solution to \mvd or \gmvd must necessarily cover all unbalanced cycles (and similarly non-top cover for \mvid and \gmvid, and top cover for \mvdd and \gmvdd).
What is more surprising is that for \mvd this is also sufficient.

\begin{theorem}[\cite{frv-mvdha-18}]\thmlab{coverOldInc}
 If $G$ is an instance of \mvid and $S$ is a non-top cover of all unbalanced cycles, then $G$ can be converted into a metric graph by only increasing weights of edges in $S$.
\end{theorem}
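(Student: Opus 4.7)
The plan is to explicitly construct a metric weighting $w'$ that agrees with $w$ on $E \setminus S$ and only increases weights on $S$. Specifically, take $w'(e) = w(e)$ for $e \in E \setminus S$, and for $e = uv \in S$ set $w'(e) = d_{G - S}(u, v)$, the shortest path distance from $u$ to $v$ in the subgraph $G - S$ (with original weights retained on the remaining edges). This is the natural candidate because any metric extension of the weights on $E \setminus S$ ought to assign each missing pair exactly its shortest-path distance.

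First I would verify that this is an ``increase only'' modification, i.e.\ $w'(e) \geq w(e)$ for every $e \in S$. Suppose for contradiction that some $e = uv \in S$ satisfies $d_{G - S}(u, v) < w(e)$. Then there is a simple $u$-to-$v$ path $P$ in $G - S$ with total weight strictly less than $w(e)$, so $e$ together with $P$ forms an unbalanced cycle whose top edge is $e$ and whose non-top edges all lie in $E \setminus S$. This contradicts that $S$ is a non-top cover. The same contradiction argument applied to $e \in E \setminus S$ (the hypothetical shorter path in $G - S$ would yield an unbalanced cycle with no edge of any kind in $S$) gives $d_{G - S}(u, v) = w(e)$ for $e \in E \setminus S$, where the reverse inequality is trivial since $e$ itself lies in $G - S$. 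Uniformly then, $w'(e) = d_{G - S}(u, v)$ for every edge $e = uv$ of $G$.

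Next I would verify that $w'$ is metric. Since $G$ is complete in the \mvid setting, it suffices to check the triangle inequality on every triple $u, v, w$, and this is immediate from the triangle inequality for shortest-path distances: $w'(uv) = d_{G - S}(u, v) \leq d_{G - S}(u, w) + d_{G - S}(w, v) = w'(uw) + w'(wv)$.

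The main technical content is the cycle-contradiction step establishing $d_{G - S}(u, v) \geq w(e)$, which is precisely where the non-top cover hypothesis is used; everything downstream is either bookkeeping or the triangle inequality for shortest paths. A minor edge case is pairs $(u, v)$ lying in distinct connected components of $G - S$, where $d_{G - S}(u, v) = \infty$; such edges necessarily belong to $S$, and may instead be assigned a common sufficiently large finite value $M$, after which the triangle inequality still holds trivially since any triangle containing such a cross-component edge must contain at least two of them.
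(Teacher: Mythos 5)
Your construction is correct, but it is a genuinely different argument from the one in the literature: \thmref{coverOldInc} is only cited here, and the proof the paper actually gives is of its generalization \thmref{gmvid_iff}, which proceeds by an iterative procedure --- first reducing to integer weights, then repeatedly increasing some covered non-top edge by one unit while maintaining the non-top-cover invariant, and deriving a contradiction via a closed walk $\mathbb{P}_1\circ\cdots\circ\mathbb{P}_k\circ t$ whenever no edge can be increased. You instead exhibit the target weighting in one shot, $w'(uv)=d_{G-S}(u,v)$, and your two verifications are sound: the non-top-cover hypothesis rules out a $u$--$v$ path in $G-S$ cheaper than $w(uv)$ (for $e\in S$ and, a fortiori, for $e\notin S$, where the offending cycle would avoid $S$ entirely), and metricity is then the triangle inequality for shortest-path distances, which suffices because $G$ is complete. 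Your approach buys a shorter proof that works for arbitrary positive real weights without the integer-then-rational reduction, and it produces the repaired weights explicitly rather than delegating that to the feasibility LP of \secref{prelims}; the paper's iterative scheme, by contrast, is the one that extends to the two-sided setting of \thmref{gmvd_iff}, where edges in $S^-$ must decrease and no comparably clean closed-form target is apparent. Two small points to tighten: in the disconnected case you should take $M$ at least $\max\bigl(\max_{e\in E} w(e),\ \max_{u,v} d_{G-S}(u,v)\bigr)$ so that the modification remains increase-only and the within-component side $w'(uv)\le 2M$ of the triangle inequality holds (it is not purely ``trivial''); and avoid using $w$ both for the weight function and for a vertex in the final display.
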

\begin{theorem}[\cite{frv-mvdha-18}]\thmlab{coverOldGen}
 If $G$ is an instance of \mvd and $S$ is a regular cover of all unbalanced cycles, then $G$ can be converted into a metric graph by only changing weights of edges in $S$.
\end{theorem}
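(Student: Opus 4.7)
My plan is to give an explicit construction of new weights on the edges of $S$ and then verify that the resulting graph is metric. For every edge $e=(u,v)\in S$, set
\[
w_{\mathrm{new}}(e) \; := \; d_{G-S}(u,v),
\]
the shortest-path distance from $u$ to $v$ in the subgraph obtained by deleting all edges of $S$ (and keeping the original weights on the rest). Edges outside $S$ keep their original weights.

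The crux is a \emph{substitution property}: every $u$-$v$ path $P$ in the new graph has new-weight length at least $d_{G-S}(u,v)$. Indeed, for each $S$-edge $f=(a,b)$ appearing on $P$, its new weight $w_{\mathrm{new}}(f)$ equals the old-weight length of some $a$-$b$ path $Q_f$ in $G-S$. Replacing each such $f$ on $P$ by the corresponding $Q_f$ produces a $u$-$v$ walk $W$ inside $G-S$ whose old-weight total equals the new-weight length of $P$. Since every $u$-$v$ walk in $G-S$ has length at least $d_{G-S}(u,v)$, the claim follows.

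Given this property, for $e=(u,v)\in S$ the direct edge has new weight exactly $d_{G-S}(u,v)$ and every alternative path has length at least this value, so $e$ realizes the shortest-path distance in the new graph. For $e=(u,v)\notin S$, I need $w(e)=d_{\mathrm{new}}(u,v)$; the direct edge handles one direction, and the other reduces to showing $d_{G-S}(u,v)\geq w(e)$. This is precisely where the regular-cover hypothesis enters: if $d_{G-S}(u,v)<w(e)$, then $e$ together with a shortest $u$-$v$ path in $G-S$ forms a cycle lying entirely in $E\setminus S$ in which $e$ strictly dominates the sum of the remaining edges, i.e.\ an unbalanced cycle with no $S$-edge, contradicting the assumption that $S$ regular covers $G$.

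The main technical obstacle is the case in which $G-S$ is disconnected, so that $d_{G-S}(u,v)=\infty$ for some $S$-edges. My plan is to handle this by iteratively \emph{promoting} one representative $S$-edge between each pair of components of $G-S$ back into the kept subgraph (retaining its original weight), thereby rebuilding connectivity while preserving the regular-cover structure for the remaining edges of $S$. Verifying that these promotions can be performed without creating fresh unbalanced cycles that escape the rest of $S$ is the delicate step; I expect a careful choice, such as the minimum-weight crossing edge between each component pair, to work, but this requires a short inductive argument on the number of components.
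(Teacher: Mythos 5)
Your overall strategy is genuinely different from the paper's: you propose a single explicit reweighting ($w_{\mathrm{new}}(e)=d_{G-S}(u,v)$ for $e\in S$) verified directly against the shortest-path condition, whereas the paper (both for the generalization in \thmref{gmvd_iff} and in \cite{frv-mvdha-18} for this statement) first partitions $S$ into an ``increase'' set and a ``decrease'' set via a helper lemma and then runs an iterative unit-step procedure, maintaining the invariant that every unbalanced cycle stays covered until the modified weights saturate at $L$ or $0$. Your argument for the case where $G-S$ is connected is correct: the substitution property reduces everything to showing $d_{G-S}(u,v)\geq w(e)$ for each kept edge $e=(u,v)$, and a violation would exhibit an unbalanced cycle lying entirely in $E\setminus S$, contradicting the regular-cover hypothesis. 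This is arguably cleaner than the incremental route and sidesteps its integrality/rationality caveats.

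However, the repair you sketch for the disconnected case has a genuine flaw. Promoting ``one representative $S$-edge between each pair of components'' can itself manufacture an uncovered unbalanced cycle: with three components $V_1,V_2,V_3$ and promoted edges $f_{12},f_{23},f_{13}$ of weights, say, $100$, $1$, $1$, these three edges together with short intra-component paths form a simple cycle lying entirely in the kept edge set, unbalanced with top edge $f_{12}$; in $G$ that cycle was covered only by the edges you just promoted, so the verification for $f_{12}$ fails. Choosing minimum-weight representatives does not help, since the weights of crossing edges between different component pairs are unrelated. The correct fix is structural rather than weight-based: promote exactly one crossing edge per edge of a \emph{spanning tree} of the component quotient graph ($k-1$ edges for $k$ components), never one per pair. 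Contracting each component then shows that any simple cycle of kept edges traverses each promoted edge an even number of times, hence zero times, so every simple cycle of kept edges already lies in $E\setminus S$ and the regular-cover hypothesis applies verbatim; moreover each promoted edge is a bridge of the kept subgraph and is trivially its own shortest path. With that substitution your proof goes through.
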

In \secref{cover} we generalize the above theorems to \gmvid and \gmvd.

\subsection*{Feasibility Checking}
Given a general dissimilarity graph $G$, let $S$ be any subset of edges whose weights can be modified to convert $G$ into a general metric graph. 
We now show that there is a linear program which determines how to set the weights of the edges in $S$, 
and thus in the remainder of the paper we only need to focus on finding the set $S$.  
A similar linear program was described \cite{frv-mvdha-18}, though here some slight modifications are needed as $G$ is no longer assumed to be complete.

Let the vertices in $G=(V,E)$ be labeled $[n] = \{1,\ldots,n\}$. 
Then for all pairs $i,j\in [n]$, we define a (symmetric) variable $\alpha_{ij}=\alpha_{ji}$. 
To enforce that we get a general metric graph, we require that all triangle inequalities $\alpha_{ik}\leq \alpha_{ij}+\alpha_{jk}$ are satisfied.
As only edges in $S$ are allowed to change weights, we require for any $(i,j)\in E\setminus S$ that $\alpha_{ij} = w((i,j))$. We thus have the following.

\begin{equation*}
\begin{aligned}
& & & \alpha_{ij} = \alpha_{ji} = w((i,j))\; & &~~~\forall (i,j)\in E\setminus S \\
& & & \alpha_{ij} = \alpha_{ji} \geq 0\; & &~~~\forall (i,j)\notin E\setminus S \\
& & & \alpha_{ik}\leq \alpha_{ij}+\alpha_{jk}\; & &~~~\forall \text{ distinct triples } \{i,j,k\} \\
\end{aligned}
\end{equation*}
\newcommand{\lpproof}{%
We now argue the LP is feasible if and only if weights of edges in $S$ can be modified to convert $G$ into a general metric graph.
First, suppose we have such a set $S$, and that the edge weights in $S$ have already been modified such that $G$ is now a general metric graph.
Then we argue setting $\alpha_{ij} = w((i,j))$ for all $(i,j)\in E$, and otherwise setting $\alpha_{ij}$ equal to the shortest path distance from $i$ to $j$ in $G$, gives a feasible LP solution.
Specifically, by definition of general metric graphs, $\alpha_{ij} = w((i,j))$ is the shortest path distance between $i$ and $j$ in $G$, which is the same way we defined $\alpha_{ij}$ for all $(i,j)\notin E$. 
Thus as the triangle inequality holds for shortest path distances, $\alpha_{ik} = d(i,k) \leq d(i,j) + d(j,k) =\alpha_{ij}+\alpha_{jk}$, where $d(x,y)$ is the shortest path distance from $x$ to $y$.

Now suppose that the LP has a feasible solution, and set the weight of each edge $(i,j)\in S$ equal to the $\alpha_{i,j}$ value from the LP.
Consider any edge $(i,j)\in E$, and let $i = l_1,l_2,\ldots,l_k = j$ be the in order sequence of vertices in a shortest path from $i$ to $j$ in $G$.
By definition, to argue that $G$ (with the weights for edges in $S$ set to their $\alpha_{ij}$ values) is a generalized metric graph, we need to show $w(l_1,l_k)\leq \sum_{x=1}^{k-1} w(l_x,l_{x+1})$. 
However, by repeated application of the triangle inequality, $w(l_1,l_k) = \alpha_{l_1,l_k} \leq \sum_{x=1}^{k-1} \alpha_{l_x,l_{x+1}} = \sum_{x=1}^{k-1} w(l_x,l_{x+1})$, 
thus proving the claim.
}
\InFullVer{\lpproof}

\InSODAVer{
It is straightforward to argue the above LP is feasible if and only if weights of edges in $S$ can be modified to convert $G$ into a general metric graph, 
and for completeness the proof is given in the appendix.}%
Later in the paper we consider \gmvid, where entries are only allowed to be increased.
Note that the above linear program can trivially be modified to handle this case.
Namely, replace the second constraint with $\alpha_{ij} = \alpha_{ji}\geq w((i,j)) ~~ \forall (i,j)\in S$ and $\alpha_{ij} = \alpha_{ji}\geq 0 ~~ \forall (i,j)\notin E$.


\section{Covering is Sufficient}\seclab{cover}

Here we generalize \thmref{coverOldInc} and \thmref{coverOldGen} to the case of general dissimilarity graphs. 
These key structural results are needed to design our approximation algorithms. 
Also note our reduction from \gmvid to \gmvd in the next section depends on both \thmref{gmvid_iff} and \thmref{gmvd_iff} below.

\begin{theorem}\thmlab{gmvid_iff}
If $G$ is an instance of \gmvid and $S$ is a non-top cover of all unbalanced cycles, then $G$ can be converted into a metric graph by only increasing weights of edges in $S$.
\end{theorem}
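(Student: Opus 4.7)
I plan to reduce the general case to the complete-graph case already handled by \thmref{coverOldInc}. Specifically, I would form a complete dissimilarity graph $G^*$ on vertex set $V$ by keeping every edge of $E$ with its original weight and adding, for every non-adjacent pair $(u,v)$ with $(u,v) \notin E$, a \emph{virtual} edge of weight $d_G(u,v)$, the shortest-path distance in $G$. (The preliminaries assume $G$ is connected; if not, the problem decomposes across connected components.) Let $S^* = S \cup (E(G^*) \setminus E)$ consist of the $S$-edges together with all virtual edges. The plan is: show $S^*$ non-top covers $G^*$, invoke \thmref{coverOldInc} on the complete instance $(G^*, S^*)$, and then restrict back to $E$.

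The key step is verifying that $S^*$ is a non-top cover of every unbalanced cycle $C$ in $G^*$. I would first argue that the top edge $t_C$ of $C$ cannot be virtual. Indeed, if $t_C = (u,v)$ were virtual, then the complementary path $\pi$ of $C$ would be a $u$--$v$ walk in $G^*$ of total weight strictly less than $w(t_C) = d_G(u,v)$; replacing each virtual edge of $\pi$ by its underlying shortest path in $G$ yields a $u$--$v$ walk in $G$ of the same total weight, contradicting the definition of $d_G(u,v)$. So $t_C \in E$. If in addition every non-top edge of $C$ lies in $E$, then $C$ is an unbalanced cycle of $G$ and by hypothesis $S$ contains a non-top edge of $C$; otherwise some non-top edge of $C$ is virtual and therefore lies in $S^*$ by construction. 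Either way $S^*$ contains a non-top edge of $C$.

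Having established the cover property, I would apply \thmref{coverOldInc} to $(G^*, S^*)$ to obtain a reweighting $G^{*\prime}$ of $G^*$ into a (complete) metric graph in which only $S^*$-edge weights have been increased, so in particular every edge of $E \setminus S$ retains its original weight. Restricting $G^{*\prime}$ back to the edge set $E$ gives the desired modified $G'$: for any $(u,v) \in E$, I would chain
$w_{G'}((u,v)) = w_{G^{*\prime}}((u,v)) = d_{G^{*\prime}}(u,v) \leq d_{G'}(u,v) \leq w_{G'}((u,v))$,
where the middle equality uses that $G^{*\prime}$ is metric and the first inequality uses $G' \subseteq G^{*\prime}$. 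This forces $w_{G'}((u,v)) = d_{G'}(u,v)$, so $G'$ is a general metric graph obtained from $G$ by only increasing weights of edges in $S$, as required.

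\textbf{Main obstacle.} The crux is the cover-preservation step, specifically ruling out a virtual top edge. One might worry that forming the completion spawns new unbalanced cycles whose top edges are virtual and hence uncovered by $S$. Choosing the virtual weight to equal $d_G(u,v)$ is precisely the calibration that kills this case: any competing cheaper $u$--$v$ path in $G^*$ can be pulled back to a cheaper $u$--$v$ walk in $G$ by expanding virtual edges along their underlying shortest paths, contradicting the definition of $d_G$.
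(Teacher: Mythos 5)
Your proof is correct, but it takes a genuinely different route from the paper. The paper proves the theorem from scratch with a discrete potential argument: assuming integer weights bounded by $L$, it repeatedly increments by one unit some edge of $S$ lying on an unbalanced cycle, maintaining the invariant that every unbalanced cycle stays non-top covered and no weight exceeds $L$; if no such increment is possible, it splices the ``blocking'' cycles $\mathbb{C}_f$ into the original unbalanced cycle to form a closed walk, extracts from it an unbalanced cycle containing the top edge, and derives a contradiction with the cover hypothesis. You instead reduce to the already-known complete-graph case (\thmref{coverOldInc}) by taking the metric completion $G^*$ with virtual edges weighted $d_G(u,v)$, covering all virtual edges, and pulling the resulting metric back to $E$. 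Your two key steps both check out: a virtual edge can never be the top edge of an unbalanced cycle of $G^*$ (expanding virtual edges along their underlying shortest paths would produce a $u$--$v$ walk in $G$ cheaper than $d_G(u,v)$), and the final sandwich $w_{G'}((u,v)) = d_{G^{*\prime}}(u,v) \leq d_{G'}(u,v) \leq w_{G'}((u,v))$ correctly uses that $G'$ is a subgraph of the metric graph $G^{*\prime}$. Your reduction is shorter and cleaner, at the price of treating \thmref{coverOldInc} as a black box; the paper's self-contained iterative template has the advantage that it extends directly to the harder \gmvd statement (\thmref{gmvd_iff}), where the cover must first be split into an ``increase'' part and a ``decrease'' part and the closed-walk argument is run in two cases.
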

\begin{proof}
For now assume all edge weights are integers and let $L$ denote the largest edge weight.  
We describe a procedure which only modifies the weights of edges in $S$, producing a new instance $G'$, such that
\begin{inparaenum}[(i)]
 \item $S$ remains a non-top cover of $G'$, 
 \item the weight of at least one edge strictly increases and none decrease, and
 \item no edge weight is ever increased above $L$.
\end{inparaenum} 
For any instance $G$ (with at least one unbalanced cycle) and non-top cover $S$, if we prove such a procedure exists whenever not all the edges in $S$ have weight $L$, 
then this will imply the lemma.
Specifically, after applying the procedure at most $|S|\cdot L$ times, all edge weights will be equal to $L$, 
and hence no unbalanced cycle (and so no unsatisfied triangle inequality) can remain, 
since otherwise $S$ would cover that unbalanced cycle with an edge of weight $L$, 
which is at least the weight of that cycle's top edge (i.e. the cycle is not actually unbalanced).
Moreover, this procedure only increases weights of edges in $S$, as desired.
Note also that as the number of steps in this existential argument was irrelevant, so long as it was finite, 
this procedure will imply the claim for rational input weights as well.

We now prove the above described procedure exists.  Let $G$ and $S$ be as in the lemma statement, 
and fix any unbalanced cycle, $C$, which can be assumed to exist, as otherwise the lemma is trivially true. 
Let $t$ be the top edge of $C$, and let $C\setminus t$ denote the set of non-top edges of $C$.
Let $F=(C\setminus t)\cap S$ be the set of non-top edges in $C$ that are covered by $S$. Note that $F$ is non-empty. 

If there exists any edge $f\in F$ whose weight can be increased without creating any new unbalanced cycle which is not non-top covered, then we increase $f$ by one, and move onto the next iteration of our procedure.  
Thus we can assume we have reached an iteration where no such edge exist. Thus for every edge $f\in F$, there must be some cycle $\mathbb{C}_f$, which is not non-top covered, and such that if we increased $f$ at all then $\mathbb{C}_f$ would become unbalanced. Note this implies $w(f) = \sum_{e\in (\mathbb{C}_f\setminus f)} w(e)$, and if increased $f$ would be the top edge of this new unbalanced cycle.

Let $(t, e_1, e_2,\cdots, e_{k})$ be a cyclic ordering of the edges in $C$. For any $e_i\in F$,  
let $\pi(\mathbb{C}_{e_i} \setminus e_i)$ denote the sub-path of cycle $\mathbb{C}_{e_i}$ starting at the common vertex of $e_{i-1}$ and $e_{i}$, and ending at the common vertex of $e_i$ and $e_{i+1}$. 
For any $e_i\in (C\setminus t)$, if $e_i\in F$ then define $\mathbb{P}_i=\pi(\mathbb{C}_{e_i} \setminus e_i)$, and otherwise define $\mathbb{P}_i=e_i$.
Consider the closed walk $\sigma = \mathbb{P}_1 \circ\cdots\circ \mathbb{P}_{k} \circ t$.
As $\sigma$ is closed it must contain some cycle $D$ which includes $t$.
Since $D$ is a cycle from $\sigma$, $w(D\setminus t) = w(D) - w(t) \leq w(\mathbb{P}_1) + \cdots + w(\mathbb{P}_{k}) = w(e_1) + \cdots + w(e_{k}) < w(t)$, where the last inequality follows since $C$ is unbalanced.
Thus by definition $D$ is unbalanced with top edge $t$, and so by assumption must be non-top covered by $S$, which in turn implies some edge in $\sigma\setminus t$ lies in $S$.  
However, every edge in $\sigma\setminus t$ is either a non-top edge of $C$ which is not in $S$, or a non-top edge of $\mathbb{C}_f$ for some $f\in F$, and we assumed all such cycles are not non-top covered.
Thus in either case we get a contradiction.

Therefore, as long as there are unbalanced cycles, each step of the procedure strictly increases an edge from $S$ and no edges get decreased. Since any newly created unbalanced cycle is still covered by $S$, conditions (i), (ii), and (iii) are satisfied.
\end{proof}


\newcommand{\helper}{
The following helper lemma will be required for our proof of \thmref{gmvd_iff}.
This lemma was originally proven for the \mvd problem in \cite{frv-mvdha-18}. 
Its proof works verbatim for the \gmvd problem, and for completeness is included here.

\begin{lemma}[\cite{frv-mvdha-18}] \lemlab{split}
 Let $G=(V,E)$ be an instance of \gmd. If $S$ is a regular cover of all unbalanced cycles, then $S$ can be partitioned into two disjoint sets $S^+$ and $S^-$ such that each unbalanced cycle is either non-top covered by $S^+$ or top covered by $S^-$.
\end{lemma}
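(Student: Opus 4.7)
The plan is to construct the partition explicitly and verify it by a substitution/closed-walk argument in the spirit of the proof of \thmref{gmvid_iff}. I would define
\[ S^- \;=\; \{\, e \in S : e \text{ is the top edge of some unbalanced cycle } C_e \text{ with } S \cap (C_e \setminus e) = \emptyset \,\}, \]
and set $S^+ = S \setminus S^-$. Any unbalanced cycle whose top edge already lies in $S^-$ is immediately top covered, so the remaining task is to show that every unbalanced cycle $C$ whose top edge $t$ satisfies $t \notin S^-$ (i.e., either $t \notin S$, or $t \in S^+$) is non-top covered by $S^+$.

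Fix such a $C$ and suppose for contradiction that every edge of $(C \setminus t) \cap S$ lies in $S^-$. Writing $C = (t, e_1, \ldots, e_k)$ cyclically, for each $e_i \in (C \setminus t) \cap S$ I pick a witnessing unbalanced cycle $C_{e_i}$ with top edge $e_i$ and $S \cap (C_{e_i} \setminus e_i) = \emptyset$, and set $\mathbb{P}_i = \pi(C_{e_i} \setminus e_i)$; for $e_i \notin S$ I leave $\mathbb{P}_i = e_i$. Concatenating these pieces with $t$ yields a closed walk $\sigma = \mathbb{P}_1 \circ \cdots \circ \mathbb{P}_k \circ t$. Since each substituted path satisfies $w(\mathbb{P}_i) < w(e_i)$ while the untouched pieces satisfy $w(\mathbb{P}_i) = w(e_i)$, and since $C$ is unbalanced with top edge $t$, one obtains $w(\sigma \setminus t) \le w(C \setminus t) < w(t)$.

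Next I extract from $\sigma$ a simple cycle $D$ through $t$, by tracing $\sigma$ between the two endpoints of $t$ (without using $t$ itself) and shortcutting the resulting walk to a simple path, then closing with $t$. Then $w(D \setminus t) \le w(\sigma \setminus t) < w(t)$, so $D$ is itself an unbalanced cycle with top edge $t$. Every edge of $\sigma$ other than $t$ is either a non-top edge of $C$ outside $S$ or an edge of some $\pi(C_{e_i} \setminus e_i)$, which lies outside $S$ by the defining property of $C_{e_i}$; hence $D \cap S \subseteq \{t\}$. If $t \notin S$ then $D$ is disjoint from $S$, contradicting that $S$ is a regular cover of all unbalanced cycles. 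If instead $t \in S^+$, then $D$ itself witnesses $t \in S^-$ by the definition above, contradicting $t \in S^+$. Either way we reach a contradiction, so some non-top edge of $C$ must lie in $S^+$, completing the argument.

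I expect the main obstacle to be the bookkeeping in the closed-walk step, namely verifying that the extracted simple cycle $D$ through $t$ really is assembled entirely from $t$ together with edges drawn from the substituted pieces, so that the intersection analysis $D \cap S \subseteq \{t\}$ is airtight, and that the strict inequality $w(D \setminus t) < w(t)$ survives the shortcutting. Once this is in place, the case split on whether $t \in S^-$, $t \in S^+$, or $t \notin S$ delivers the required non-top/top covering properties directly from the construction of $S^\pm$.
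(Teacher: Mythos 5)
Your proof is correct, but it takes a genuinely different route from the paper's. The paper proves the lemma by an iterative procedure: it moves edges of $S$ one at a time into $S^+$ or $S^-$, maintaining the invariant that every unbalanced cycle is top covered by $S^-\cup S$ or non-top covered by $S^+\cup S$, and the heart of the argument is showing (by splicing together two witness cycles $\mathbb{C}_1$ and $\mathbb{C}_2$ for a single contested edge $b$, and extracting an unbalanced cycle from the resulting closed walk) that at least one of the two placements preserves the invariant. You instead give a direct, non-iterative construction: you observe that $S^-$ is \emph{forced} to contain every edge that is the top edge of some unbalanced cycle met by $S$ only at that top edge, you define $S^-$ to be exactly this forced set, and you verify the conclusion with a single splice-and-extract argument that substitutes, into a problematic cycle $C$, the $S$-free witness paths $\pi(C_{e_i}\setminus e_i)$ for all of its $S^-$ non-top edges simultaneously. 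The core technical device (replace covered edges by strictly lighter $S$-free paths, close the walk, extract a simple cycle through the top edge, and derive an uncovered unbalanced cycle) is the same in both arguments, and your weight bookkeeping is sound — in particular, $w(\sigma\setminus t)<w(t)$ forces $t$ not to occur inside the spliced walk, so the extracted cycle $D$ satisfies $D\cap S\subseteq\{t\}$ as you claim, and your three-way case split on $t\notin S$, $t\in S^+$, $t\in S^-$ is exhaustive. What your version buys is the elimination of the induction/invariant machinery and an explicit characterization of $S^-$ as the minimal set of necessarily-top-covering edges; what the paper's version buys is a local, one-edge-at-a-time check that only ever needs to combine two cycles in the splicing step.
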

\begin{proof}
Let $S$ be a regular cover of all unbalanced cycles and let $S^+$ and $S^-$ initially be empty sets.  
We now define an interative procedure, which in each iteration removes one edge from $S$ and adds it to either $S^+$ or $S^-$.
We maintain the invariant that each unbalanced cycle is either top covered by $S^-\cup S$ or non-top covered by $S^{+}\cup S$.
Thus, after a finite number of iterations, $S$ will be empty, 
and $S^+$ and $S^-$ will be two disjoint sets such that each unbalanced cycle is either non-top covered by $S^+$ or top covered by $S^-$.

Suppose at step $i$, we pick an edge $b$ from set $S$. There are two cases.
Case 1: if each unbalanced cycle is either top covered by $S^{-}\cup (S\setminus b)$ or non-top covered by $(S^{+} \cup b)\cup (S\setminus b)$, then we add $b$ to $S^{+}$.
Case 2: if each unbalanced cycle is either top covered by $(S^{-}\cup b)\cup (S\setminus b)$ or non-top covered by $S^{+}\cup (S\setminus b)$, then we add $b$ to $S^{-}$.
We now argue by contradiction that these are the only possible cases.

If Case 1 does not hold, then there must be an unbalanced cycle $\mathbb{C}_1$ which is neither top covered by $S^{-}\cup (S\setminus b)$ nor non-top covered by $(S^{+} \cup b)\cup (S\setminus b)$. As $\mathbb{C}_1$ must be top covered by $S^-\cup S$ or non-top covered by $S^{+}\cup S$ (by induction), this implies that $b$ must top cover $\mathbb{C}_1$.
If Case 2 does not hold then there is an unbalanced cycle $\mathbb{C}_2$ which is neither top covered $(S^{-}\cup b)\cup (S\setminus b)$ nor non-top covered by $S^{+}\cup (S\setminus b)$, 
and similarly this implies $b$ must non-top cover $\mathbb{C}_2$.
Now consider the closed walk $\sigma = \pi(\mathbb{C}_1\setminus b) \circ \pi(\mathbb{C}_2\setminus b)$ (see \figref{both3}). Let $t$ be the top edge of $\mathbb{C}_2$.
There exists a cycle $C$ only containing edges from $\sigma$ whose top edge is $t$, and is unbalanced because 
$w(t) > w(\mathbb{C}_2\setminus t) = w(\mathbb{C}_2\setminus \{t,b\}) + w(b) > w(\mathbb{C}_2\setminus \{t,b\}) + w(\mathbb{C}_1\setminus b)\geq w(C)-w(t)$. 
Observe that $(\mathbb{C}_1\setminus b)\cap(S^+\cup S) = \emptyset$ and $(\mathbb{C}_2\setminus t)\cap(S^+\cup (S\setminus b)) = \emptyset$ which implies $(C\setminus t)\cap (S^+\cup S) = \emptyset$ because $b\notin C$.
Additionally, we have that $t\notin (S^-\cup S)$ as otherwise $\mathbb{C}_2$ would be top covered by $(S^{-}\cup b)\cup (S\setminus b)$.
As $C$ must be top covered by $S^-\cup S$ or non-top covered by $S^{+}\cup S$ (again by induction), 
this give a contradiction as we showed $t\notin (S^-\cup S)$ and $(C\setminus t)\cap (S^+\cup S) = \emptyset$.

\begin{figure}[t]
    \centering
    \includegraphics[height=.24\linewidth]{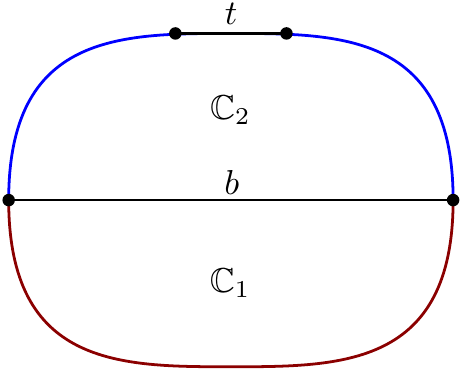}
    \caption{Edge $b$ top covers $\mathbb{C}_1$ and non-top covers $\mathbb{C}_2$.}
    \figlab{both3}
\end{figure}

Therefore, we can add $b$ to $S^+$ or $S^-$ (according to case 1 or 2) and remove it from $S$ 
such that each unbalanced cycle remains either top covered by $S^-\cup S$ or non-top covered by $S^{+}\cup S$.
Thus, after at most $|S_0|$ rounds, where $S_0$ is the initial state of $S$, $S^+$ and $S^-$ will be as in the lemma statement. 
\end{proof}
}

\InFullVer{\helper}
\InSODAVer{
The proof of the following theorem for \gmvd is similar to, albeit more intricate than, the above \gmvid proof.
To give our main results in the first 10 pages, its proof was moved to the appendix.
}

\begin{theorem}\thmlab{gmvd_iff}
If $G$ is an instance of \gmd and $S$ is a regular cover of all unbalanced cycles, then $G$ can be converted into a metric graph by only changing weights of edges in $S$.
\end{theorem}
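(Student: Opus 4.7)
The plan is to bypass the iterative/constructive approach of \thmref{gmvid_iff} entirely and instead exhibit a feasible solution to the linear program described in the Feasibility Checking subsection; since that LP is feasible if and only if $G$ can be converted into a general metric graph by changing only the weights of edges in $S$, this suffices.

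First, I would consider the subgraph $H = (V, E \setminus S)$ with inherited edge weights and show it is itself a general metric graph. Since $S$ regular covers every unbalanced cycle of $G$, and any cycle in $H$ is also a cycle in $G$ that contains no edge of $S$, the graph $H$ has no unbalanced cycles. I would then show that the absence of unbalanced cycles forces $H$ to be metric: were some $(u,v) \in E \setminus S$ not its own shortest path in $H$, a strictly shorter simple $u$-$v$ path $\pi$ (which must avoid $(u,v)$ itself, else it would coincide with $(u,v)$) combined with $(u,v)$ would form a simple cycle in $H$ of positive deficit with $(u,v)$ as top edge, contradicting what we just established.

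Second, I would extend this metric to an assignment for every pair. For each pair $(i,j)$, set $\alpha_{ij} = d_H(i,j)$ if $i$ and $j$ lie in the same connected component of $H$, and $\alpha_{ij} = M$ otherwise, where $M$ is a constant chosen larger than the total weight of all edges of $G$. For $(i,j) \in E \setminus S$ the equality $\alpha_{ij} = w((i,j))$ holds because $H$ is metric; non-negativity is immediate; and each triangle inequality $\alpha_{ik} \le \alpha_{ij} + \alpha_{jk}$ reduces to a brief case analysis on which components $i,j,k$ lie in. Within a single component it is the usual triangle inequality for shortest-path distances, while in any case where at least two of the three vertices lie in distinct components one of the right-hand terms equals $M$, which dominates once $M$ is chosen sufficiently large.

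Feasibility of the LP then immediately yields the theorem by the Feasibility Checking argument: setting $w'((i,j)) = \alpha_{ij}$ for each $(i,j) \in S$ converts $G$ into a general metric graph. The main (and essentially only) obstacle is handling pairs in distinct components of $H$, which the dummy value $M$ addresses uniformly. Notably, unlike the proof of \thmref{gmvid_iff}, this argument is non-iterative and sidesteps any helper splitting lemma, exploiting the freedom to both increase and decrease weights simultaneously through the LP.
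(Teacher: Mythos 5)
Your proof is correct, and it takes a genuinely different route from the paper's. The paper proves this theorem by first invoking a splitting lemma (\lemref{split}) to partition $S$ into sets $S^+$ and $S^-$ so that every unbalanced cycle is non-top covered by $S^+$ or top covered by $S^-$, and then running a delicate iterative procedure that increments $S^+$ edges and decrements $S^-$ edges one unit at a time, with a closed-walk contradiction argument showing the procedure never gets stuck until all $S^+$ edges reach the maximum weight $L$ and all $S^-$ edges reach $0$. You instead observe that since every cycle of $H=(V,E\setminus S)$ is a cycle of $G$ disjoint from $S$, the regular-cover hypothesis forces $H$ to have no unbalanced cycles, which (by your shortest-path argument) is equivalent to $H$ being its own metric completion; extending $d_H$ by a large constant $M$ across components then yields a feasible point of the LP from the preliminaries, whose feasibility-implies-solution direction does not presuppose that $S$ is valid, so there is no circularity. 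Your case analysis of the triangle inequalities is sound (the only case needing $M$ to be large is when $i,k$ share a component and $j$ does not, where $d_H(i,k)\le 2M$ follows from $M$ exceeding the total edge weight). What your argument buys is brevity, the elimination of \lemref{split} entirely, and an explicit closed form for the repaired weights (the metric completion of $G\setminus S$, padded by $M$); what the paper's heavier machinery buys is that essentially the same argument, minus the splitting step, also proves \thmref{gmvid_iff} for \gmvid, where your construction fails because $d_H(u,v)$ may be strictly smaller than $w((u,v))$ for an edge $(u,v)\in S$ and only increases are permitted---a limitation you correctly acknowledge. One cosmetic point: you should note that all assigned weights are strictly positive (true, since $d_H(i,j)>0$ for $i\ne j$ and $M>0$), as general metric graphs are required to be positively weighted.
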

\newcommand{\gmvdproof}{
\begin{proof}
For now assume all edge weights are integers and let $L$ denote the largest edge weight.
First use \lemref{split} to partition $S$ into two disjoint sets $S^+$ and $S^-$, such that every unbalanced cycle is either non-top covered by $S^+$ or top covered by $S^-$.
We describe a procedure producing a new instance $G'$, such that
\begin{inparaenum}[(i)]
 \item  every unbalanced cycle of $G'$ is either non-top covered by $S^+$ or top covered by $S^-$, 
 \item the weight of either one `$+$' edge strictly increases or one `$-$' edge strictly decreases (and no other edge weights are modified), and
 \item no edge weight is ever increased above $L$ or below $0$.
\end{inparaenum} 
Proving such a procedure exists for any instance $G$ and regular cover $S$, will imply the lemma.
Specifically, after applying the procedure at most $|S|\cdot L$ times, all edge weights  in $S^{+}$ will be equal to $L$, and all edge weights  in $S^{-}$ will be equal to $0$.
This implies there are no remaining unbalanced cycles, since otherwise the unbalanced cycle must either be non-top covered by an edge of weight $L$ or top covered by an edge weight $0$, 
in either case yielding the contradiction that the cycle was actually balanced.
Moreover, this procedure correctly modifies edge weights from $S$ according to their label.
As the number of steps in this existential argument was irrelevant, so long as it was finite, 
this procedure will imply the claim for rational input weights as well.

We now prove the above described procedure exists.  Let $G$ and $S$ be as in the lemma statement, 
and fix any unbalanced cycle, $C$, which can be assumed to exist as otherwise the lemma is trivially true. 
Let $t$ be the top edge of $C$, and let $C\setminus t$ denote the set of non-top edges of $C$. Let $F=(C\setminus t) \cap S^+$ be the set of non-top edges in $C$ that are covered by $S^+$. Note that $F$ is non-empty or $t\in S^-$.
If a cycle is not non-top covered by $S^{+}$ and not top covered by $S^{-}$, we say that cycle is \textit{uncovered}.

If there exists any edge $f\in F$ whose weight can be increased without creating any new unbalanced cycle which is uncovered, then we increase $f$ by one, and move onto the next iteration of our procedure.
Similarly, we decrease $t$ by one if $t\in S^-$ and decreasing $t$ does not create any new unbalanced cycle which is uncovered.
Thus we can assume we have reached an iteration where no such edge exist. Thus for every edge $f\in F$, there must be some cycle $\mathbb{C}_f$, which is uncovered, and such that if we increased $f$ at all then $\mathbb{C}_f$ would become unbalanced. Note this implies $w(f) = \sum_{e\in (\mathbb{C}_f\setminus f)} w(e)$, and if increased $f$ would be the top edge of this new unbalanced cycle. Similarly, if $t\in S^-$, then there must be some uncovered cycle $\mathbb{C}_t$ which would become unbalanced if $t$ were decreased any further.

We now argue the existence of an unbalanced cycle $C$, in an iteration where weights in $S^{+}$ cannot be increased nor weights in $S^{-}$ decreased as described above, 
will contradict the starting assumption that all unbalanced cycles are either non-top covered by $S^+$ or top covered by $S^-$.
We break the analysis into two cases. 
For both cases, let $(t, e_1, e_2,\cdots, e_{k})$ be a cyclic ordering of the edges in $C$.
For any $e_i\in F$, let $\pi(\mathbb{C}_{e_i} \setminus e_i)$ denote the sub-path of cycle $\mathbb{C}_{e_i}$ starting at the common vertex of $e_{i-1}$ and $e_{i}$, and ending at the common vertex of $e_i$ and $e_{i+1}$. 
For any $e_i\in (C\setminus t)$, if $e_i\in F$ then define $\mathbb{P}_i=\pi(\mathbb{C}_{e_i} \setminus e_i)$, and otherwise define $\mathbb{P}_i=e_i$.

\textbf{Case 1:} $t\notin S^-$.
Consider the closed walk $\sigma = \mathbb{P}_1 \circ\cdots\circ \mathbb{P}_{k} \circ t$.
As $\sigma$ is closed it must contain some cycle $D$ which includes $t$.
Since $D$ is a cycle from $\sigma$, $w(D\setminus t) = w(D) - w(t) \leq w(\mathbb{P}_1) + \cdots + w(\mathbb{P}_{k}) = w(e_1) + \cdots + w(e_{k}) < w(t)$, where the last inequality follows since $C$ is unbalanced.
Thus by definition $D$ is unbalanced with top edge $t$, and so by assumption must either be top covered by $S^-$ or non-top covered by $S^+$. Since $t\notin S^-$, this implies some edge in $\sigma\setminus t$ lies in $S^+$.  
However, every edge in $\sigma\setminus t$ is either a non-top edge of $C$ which is not in $S^+$, or a non-top edge of $\mathbb{C}_f$ for some $f\in F$, and we assumed all such cycles are not non-top covered.
Thus in either case we get a contradiction.

\textbf{Case 2:} $t\in S^-$.
For $1\leq i\leq k$, let $\mathbb{P}_i$ be as defined above, and
consider the cycle $\mathbb{C}_t$ which by definition is uncovered and would become unbalanced, with top edge $t'$, if $t$ were decreased.
Let $(t', g_1, g_2, \cdots, g_j, t, g_{j+1}, \cdots, g_l)$ be a cyclic ordering of the edges in $\mathbb{C}_t$.
Consider the closed walk $\sigma = t' \circ g_1 \circ\cdots\circ g_j \circ \mathbb{P}_1 \circ\cdots\circ \mathbb{P}_k \circ g_{j+1} \circ\cdots\circ g_l$.
As $\sigma$ is closed it must contain some cycle $D$ which includes $t'$.
Since $D$ is a cycle from $\sigma$,
$w(D\setminus t') = w(D) - w(t')
\leq w(g_1) + \cdots + w(g_{j}) + w(\mathbb{P}_1) + \cdots + w(\mathbb{P}_{k}) + w(g_{j+1}) + \cdots + w(g_{l})
= w(g_1) + \cdots + w(g_{j}) + w(e_1) + \cdots + w(e_{k}) + w(g_{j+1}) + \cdots + w(g_{l})
< w(g_1) + \cdots + w(g_{j}) + w(t) + w(g_{j+1}) + \cdots + w(g_{l})
= w(t')$, where the strict inequality follows since $C$ is unbalanced.
Thus by definition $D$ is unbalanced with top edge $t'$, and so by assumption must either be top covered by $S^-$ or non-top covered by $S^+$. 
This implies either an edge in some $\mathbb{P}_i$ lies in $S^+$, some edge $g_i$ lies in $S^+$, or $t'\in S^-$.
However, by assumption, each $\mathbb{P}_i$ is either an uncovered non-top edge from $C$ or is a path coming from an uncovered cycle $\mathbb{C}_i$, and hence no edge in $\mathbb{P}_i$ lies in $S^+$. 
Additionally, $\mathbb{C}_t$ is assumed to be uncovered and so no $g_i$ lies in $S^+$, and $t'$ is not in $S^-$.
Thus we get a contradiction.

Thus as long as unbalanced cycles remain, at least one edge weight from $S$ gets increased or decreased (and not above $L$ or below $0$) according to its label, 
and any newly created unbalanced cycle is already top covered by $S^{-}$ or non-top covered by $S^{+}$.
\end{proof}
}
\InFullVer{
\gmvdproof
}


\section{Hardness}\seclab{hard}

Previously, \cite{frv-mvdha-18} gave an approximation-preserving reduction from Vertex Cover to both \mvd and \mvid.  Thus both are APX-complete, and in particular are hard to approximate within a factor of $2-\eps$ for any $\eps>0$, assuming the Unique Games Conjecture (UGC) \cite{Khot02a}. In this section we give stronger hardness results for \gmvid and \gmvd by giving approximation-preserving reductions from \mcut and \lcut.

\begin{problem}[\mcut]
Given an undirected unweighted graph $G=(V,E)$ on $n=|V|$ vertices together with $k$ pairs of vertices $\{s_i , t_i \}^k_{i=1}$, compute a minimum size subset of edges $M\subseteq E$ whose removal disconnects all the demand pairs, i.e., in the subgraph $(V,E \setminus M)$ every $s_i$ is disconnected from its corresponding vertex $t_i$.
\end{problem} 
\cite{ChawlaKKRS06} proved that if UGC is true, then it is NP-hard to approximate \mcut within any constant factor $L>0$, and assuming a stronger version of the UGC, within $\Omega(\sqrt{\log\log n})$.
(Note that the version of \mcut in \cite{ChawlaKKRS06} allows weights, but as the weights are polynomial, the authors remark that their hardness proofs extend to the unweighted case.)


\begin{theorem}\thmlab{mcut_to_gmvid}
 There is an approximation-preserving, polynomial-time reduction from \mcut to \gmvid.
\end{theorem}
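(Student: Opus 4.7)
The plan is to construct a \gmvid instance whose unbalanced cycles are in natural bijection with the $s_i$-to-$t_i$ paths that a multicut must destroy, so that (by \thmref{gmvid_iff}) a minimum non-top cover of the constructed graph is the same combinatorial object as a minimum multicut of the input. Given an \mcut instance $(G=(V,E),\{(s_i,t_i)\}_{i=1}^k)$, assume without loss of generality that no pair $(s_i,t_i)$ is already adjacent in $G$ (otherwise subdivide that edge with a new vertex, which preserves the multicut optimum). Build the \gmvid instance $G' = (V, E \cup D)$, where $D = \{(s_i,t_i):1\le i\le k\}$; assign weight $1$ to every edge of $E$ and weight $n$ to every edge of $D$.

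The heart of the reduction is to show that the unbalanced cycles of $G'$ are precisely the cycles consisting of one demand edge $(s_i,t_i)\in D$ together with a simple $s_i$-to-$t_i$ path $P\subseteq E$, and that in any such cycle the demand edge is the unique top. A cycle using no demand edge has every edge of weight $1$ and is therefore balanced. A cycle using exactly one demand edge has top-candidate weight $n$, while the remaining edges lie in $E$ and sum to at most $n-1$ (since any simple cycle has at most $n$ vertices), so the cycle is unbalanced with the demand edge on top. A cycle using two or more demand edges has a second demand edge already contributing $n$ to the non-top sum, so it is not unbalanced.

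Combining this characterization with \thmref{gmvid_iff}, an edge set $S\subseteq E\cup D$ is a valid \gmvid solution iff, for every $i$ and every simple $s_i$-$t_i$ path $P$ in $G$, one has $S\cap P\ne\emptyset$; equivalently, iff $S\cap E$ is a multicut. Moreover, since every demand edge is the top edge of every unbalanced cycle in which it appears, a demand edge can never serve in a non-top cover and hence adding demand edges to $S$ is never useful; thus $S\cap E$ is always a multicut of size at most $|S|$. Conversely, every multicut is itself a non-top cover. Therefore the \gmvid optimum on $G'$ equals the \mcut optimum on $G$, and projecting any $\alpha$-approximate \gmvid solution onto $E$ produces an $\alpha$-approximate multicut, yielding the claimed approximation-preserving, polynomial-time reduction.

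The principal obstacle is calibrating the demand-edge weight so that it simultaneously (i) strictly exceeds every simple $G$-path, so that every such path-plus-demand-edge cycle is unbalanced with the demand edge on top, and (ii) is small enough that any cycle containing two or more demand edges is balanced. Setting the demand weight to exactly $n$ achieves both requirements at once, since simple cycles use fewer than $n$ unit-weight edges, while two demand edges already sum to $2n$.
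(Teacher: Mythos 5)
Your proposal is correct and follows essentially the same route as the paper: the identical construction (demand edges of weight $n$, original edges of weight $1$), the same characterization that the unbalanced cycles of $G'$ are exactly the cycles formed by one demand edge plus an $s_i$-$t_i$ path in $E$, and the same appeal to \thmref{gmvid_iff} to identify non-top covers with multicuts. The only cosmetic differences are that you handle the case $(s_i,t_i)\in E$ by subdivision rather than by forcing that edge into the solution, and you spell out more explicitly why demand edges are useless in a non-top cover so that approximate solutions project back; both points are consistent with the paper's argument.
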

\begin{proof}
Let $G=(V,E)$ be an instance of \mcut with $k$ pairs of vertices $\{s_i , t_i \}^k_{i=1}$. First, if $(s_i,t_i)\in E$ for any $i$, then that edge must be included in the solution $M$.  
Thus we can assume no such edges exists in the \mcut instance, as assuming this can only make it harder to approximate the optimum value of the \mcut instance. 
We now construct an instance of \gmvid, $G'=(V',E')$. Let $V'=V$ and $E'=E\cup \{s_i , t_i \}^k_{i=1}$ where the edges in $E$ have weight $1$ and the edges $(s_i, t_i)$, for all $i\in [k]$, have weight $n=|V|$.

Observe that if a cycle in $G'$ has exactly one edge of weight $n$, then it must be unbalanced, since a cycle can have at most $n-1$ other edges (each of weight $1$).
The claim is that the converse is also true, that is any unbalanced cycle in $G'$ must have exactly one edge of weight $n$ (and note this is the top edge).  
Specifically, if the cycle has no $n$ weight edges, then all edges have weight $1$ and hence the cycle is balanced.
If the cycle has more than one edge with weight $n$, then the cycle is also balanced, since for any edge, the total weight of the other edges in the cycle is $\geq n$, and no edge has weight larger than $n$. 

Note that the edges from $G$ are exactly the weight one edges in $G'$, and thus the paths in $G$ are in one-to-one correspondence with the paths in $G'$ which consist of only weight one edges. 
Moreover, the weight $n$ edges in $G'$ are in one-to-correspondence with the $(s_i,t_i)$ pairs from $G$. 
Thus the cycles in $G'$ with exactly one weight $n$ edge followed paths of all weight one edges connecting their endpoints, which by the above are exactly the set of unbalanced cycles, are in one-to-one correspondence with paths between $(s_i,t_i)$ pairs from $G$.
Therefore, a minimum cardinality subset of edges which non-top cover all unbalanced cycles, i.e.\ an optimal solution to \gmvid, corresponds to a minimum cardinality subset of edges from $E$ which cover all paths from $s_i$ to $t_i$ for all $i$, i.e.\ an optimal solution to \mcut. 
%
\end{proof}

\begin{problem}[\lcut]
Given a value $L$ and an undirected unweighted graph $G=(V,E)$ with source $s$ and sink $t$, find a minimum size subset of edges $M\subseteq E$ such that no $s$-$t$-path of length less than or equal to $L$ remains in the graph after removing the edges in $M$.
\end{problem}

An instance of \lcut with length bound $L$, is referred to as an instance of the $L$-\lcut problem.
For any fixed $L$, Lee~\cite{l-ihcifp-17} showed that it is hard to approximate $L$-\lcut within a factor of $\Omega(\sqrt{L})$.  
\InSODAVer{The proof of the following theorem is similar to that of \thmref{mcut_to_gmvid} (actually it is simpler as there is only one $(s,t)$ pair), 
and as such has been moved to the appendix.}

\begin{theorem}\thmlab{lbcut_to_gmvid}
For any fixed value $L$, there is an approximation-preserving, polynomial-time reduction from $L$-\lcut to \gmvid.
\end{theorem}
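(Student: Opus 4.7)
The plan is to mirror the construction of \thmref{mcut_to_gmvid}, adapting it to a single demand pair and encoding the length bound $L$ via the weight of the single added edge. Given an $L$-\lcut instance $(G=(V,E),s,t)$, I would first assume without loss of generality that $(s,t)\notin E$: if $(s,t)\in E$ then it is a length-$1$ $s$-$t$ path and must appear in every feasible \lcut solution, so we add it to our output and recurse on $G \setminus \{(s,t)\}$. Then construct the \gmvid instance $G'=(V, E \cup \{(s,t)\})$, giving every edge of $E$ weight $1$ and the new edge $(s,t)$ weight $L+1$.

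The core step is characterizing the unbalanced cycles of $G'$. Every edge of $E$ has weight $1$ and $(s,t)$ has weight $L+1$; since every cycle contains at least three edges each of weight at least $1$, the total weight of the non-top edges of any cycle is at least $2$, so no weight-$1$ edge can ever be a top edge. Hence every unbalanced cycle of $G'$ uses $(s,t)$ as its top edge, and decomposes as $(s,t)$ together with an $s$-to-$t$ path $P$ in $G$; such a cycle is unbalanced precisely when $|P| < L+1$, i.e., $|P| \leq L$. This puts the unbalanced cycles of $G'$ in bijection with $s$-$t$ paths of length at most $L$ in $G$. Because $(s,t)$ is the top edge of every unbalanced cycle, a subset $S \subseteq E'$ is a non-top cover iff $S \cap E$ hits every $s$-$t$ path of length at most $L$---that is, iff $S \cap E$ is a feasible $L$-\lcut solution.

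Combining this correspondence with \thmref{gmvid_iff} (plus the trivial necessity of non-top covering all unbalanced cycles) shows that the \gmvid optimum on $G'$ equals the $L$-\lcut optimum on $G$, and any $\rho$-approximate \gmvid solution $S$ restricts to the feasible \lcut solution $S \cap E$ of size at most $|S|$. Since $G'$ is built in polynomial time, this yields an approximation-preserving polynomial-time reduction. The main subtlety to get right is the choice of weight $L+1$ on the added edge: using weight $L$ instead would render cycles arising from length-$L$ paths balanced, yet $L$-\lcut demands that length-$L$ paths also be cut, so the strict inequality in the definition of ``unbalanced'' must be calibrated to include them.
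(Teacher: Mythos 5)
Your proposal is correct and follows essentially the same route as the paper's proof: the identical construction (add $(s,t)$ with weight $L+1$, unit weights elsewhere), the same characterization of unbalanced cycles as $(s,t)$ plus an $s$--$t$ path of length at most $L$, and the same conclusion via the non-top-cover characterization from \thmref{gmvid_iff}. Your observation that no weight-$1$ edge can be a top edge (since every cycle has at least three edges) is a slightly tidier packaging of the paper's case analysis, but the argument is the same.
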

\newcommand{\lbcutproof}{
\begin{proof}
Let $G=(V,E)$ be an instance of $L$-\lcut with source $s$ and sink $t$.
First, if $(s,t)\in E$, then that edge must be included in the solution $M$.  
Thus we can assume that edge is not in the \lcut instance, as assuming this can only make it harder to approximate the optimum value of the \lcut instance. 
We now construct an instance of \gmvid, $G'=(V',E')$. Let $V'=V$ and $E'=E\cup \{(s , t)\}$ where the edges in $E$ have weight $1$ and the edge $(s, t)$ has weight $L+1$.

First, observe that any cycle containing the edge $(s,t)$ followed by $\leq L$ unit weight edges is unbalanced, as the sum of the unit weight edges will be $<L+1=w((s,t))$.
Conversely, any unbalanced cycle must contain the edge $(s,t)$ followed by $\leq L$ unit weight edges.  
Specifically, if a cycle does not contain $(s, t)$ then it is balanced since all edges would then have weight $1$.
Moreover, if a cycle contains $(s,t)$ and $>L$ other edges, then the total sum of those unit edges will be $\geq L+1 = w((s,t))$.

Note that the edges from $G$ are exactly the weight one edges in $G'$, and thus the paths in $G$ are in one-to-one correspondence with the paths in $G'$ which consist of only weight one edges. 
Moreover, the edge $(s, t)$ in $G'$ corresponds with the source and sink from $G$. 
Thus by the above, the unbalanced cycles in $G'$ are in one-to-one correspondence with $s$-$t$-paths with length $\leq L$ in $G$.
Therefore, a minimum cardinality subset of edges which non-top cover all unbalanced cycles, i.e.\ an optimal solution to \gmvid, 
corresponds to a minimum cardinality subset of edges from $E$ which cover all paths from $s$ to $t$ of length $\leq L$, i.e.\ an optimal solution to \lcut. 
\end{proof}
}
\InFullVer{\lbcutproof}


We now reduce \gmvid to \gmvd, thus implying the above theorems also hold for \gmvd.

\begin{theorem}\thmlab{gmvid_to_gmvd}
 There is an approximation-preserving, polynomial-time reduction from \gmvid to \gmvd.
\end{theorem}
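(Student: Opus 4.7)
The plan is to work at the level of the combinatorial characterizations of the two problems. By \thmref{gmvid_iff} and \thmref{gmvd_iff}, the optimum of \gmvid on $G$ is the minimum non-top cover of the unbalanced cycles of $G$, and the optimum of \gmvd on $G'$ is the minimum regular cover of the unbalanced cycles of $G'$. Thus the goal reduces to constructing $G'$ from $G$ so that any non-top cover of $G$ of size $s$ translates to a regular cover of $G'$ of size $O(s)$, and conversely any regular cover of $G'$ of size $s'$ can be converted in polynomial time to a non-top cover of $G$ of size $O(s')$.

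I would construct $G'$ by a local gadget: for each edge $e=(u,v)\in E$ with weight $w(e)$, retain $e$ in $G'$ and add a new auxiliary vertex $x_e$ together with edges $(u,x_e)$ of weight $w(e)-\delta$ and $(x_e,v)$ of weight $\delta$, where $\delta>0$ is chosen uniformly smaller than half the smallest positive cycle deficit of $G$. A case analysis should then show that the unbalanced cycles of $G'$ are exactly the unbalanced cycles $C$ of $G$, along with their \emph{detour variants} obtained by replacing any subset of edges of $C$ by the two-edge path $u\to x_e \to v$: if only non-top edges of $C$ are detoured the top of the variant remains $t$, while if $t$ itself is detoured the new top becomes $(u_t,x_t)$ of weight $w(t)-\delta$, which is the sole source of additional unbalanced cycles in $G'$.

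For the forward direction, given a non-top cover $S$ of $G$, I would take $S'=S\cup\{(u_e,x_e):e\in S\}$, of size $2|S|$, and verify that it is a regular cover of $G'$: every unbalanced variant of any $C$ contains either the witnessing non-top edge $e\in S\cap C$ (when $e$ is not detoured) or the gadget edge $(u_e,x_e)$ (when $e$ is detoured). For the reverse direction, I would define $S\subseteq E$ by including $e$ whenever $e$ or one of its gadget edges lies in $S'$, which gives $|S|\leq|S'|$. The main obstacle is the case in which $S$ contains the top edge $t$ of some unbalanced $C$ but none of its non-top edges, so that $S$ is not yet a non-top cover. In this case I would invoke the detour-$t$ variant of $C$, whose new top $(u_t,x_t)$ is distinct from $t$ and must also be covered by $S'$, to deduce that $S'$ contains at least one additional edge associated with $C$; an exchange argument, in the spirit of the partitioning step used in the proof of \thmref{gmvd_iff}, then lets me swap $t\in S$ for a suitable non-top edge of $C$ while keeping $|S|$ within a constant factor of $|S'|$. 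This swap is the technically delicate step that closes the reduction.
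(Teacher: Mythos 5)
Your high-level strategy (pass to the covering characterizations via \thmref{gmvid_iff} and \thmref{gmvd_iff}, then build a gadget graph $G'$) is the same as the paper's, and your forward direction is fine, but the reverse direction has a genuine gap that is not just a ``delicate'' exchange step: with a single gadget per edge the reduction is simply not approximation-preserving, because a regular cover of $G'$ can be far smaller than any non-top cover of $G$. Concretely, let $G$ consist of two vertices $a,b$ joined by one heavy edge $t$ and by $k$ internally disjoint $a$--$b$ paths $P_1,\dots,P_k$, each made of unit-weight edges with total weight less than $w(t)$ (so no cycle $P_i\cup P_j$ is unbalanced). Every unbalanced cycle of $G$ has top edge $t$, so in your $G'$ every unbalanced cycle is a detour variant of some $t\cup P_i$ and therefore contains either $t$ or the gadget edge $(u_t,x_t)$. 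Hence $\{t,(u_t,x_t)\}$ is a regular cover of $G'$ of size $2$, while the minimum non-top cover of $G$ has size $k$. Since the optimum of the constructed \gmvd instance is constant while the optimum of the original \gmvid instance is unbounded, no post-processing or exchange argument applied to a solution of $G'$ can recover a comparable solution of $G$; the information about which non-top edges to take is simply absent.

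The paper's construction avoids exactly this failure mode by exploiting that solutions are measured by cardinality: it attaches $|E|+1$ \emph{parallel} gadget vertices $v_{i1},\dots,v_{i,|E|+1}$ to each top edge $(s_i,t_i)$, with weights $w((s_i,v_{ij}))=L=1+\max_{e\in E}w(e)$ and $w((t_i,v_{ij}))=L-w((s_i,t_i))$, so that each gadget path substitutes for $(s_i,t_i)$ at equal length and the gadget triangle is exactly balanced. If a regular cover $M'$ of $G'$ fails to non-top cover some unbalanced cycle $C$ of $G$ with top edge $(s_i,t_i)$, then each of the $|E|+1$ parallel variants of $C$ must be hit by one of the two gadget edges of its own copy, and these are pairwise distinct, forcing $|M'|\geq |E|+1$ --- worse than the trivial non-top cover $E$. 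This replication (making it unaffordable to ``pay at the top'') is the missing idea; without it, or some substitute mechanism with the same effect, your single-copy construction cannot be repaired. Your forward map also only needs $S$ itself in the paper's construction (no doubling), since the paper gadgets only the top edges, but that is a minor point by comparison.
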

\begin{proof}
Let $G=(V,E)$ be an instance of \gmvid. 
Find the set $T=\{(s_1,t_1),\ldots, (s_{|T|},t_{|T|})\}$ of top edges of all unbalanced cycles by comparing the weight of each edge to the shortest path distance between its endpoints. 
We now construct an instance, $G'=(V',E')$, of \gmvd. 
For all $1\leq i\leq |T|$ and $1\leq j\leq |E|+1$, let $Q = \{v_{ij}\}_{i,j}$ be a vertex set, and let $F_l = \{ (s_i, v_{ij}) \}_{i,j}$ and $F_r = \{ (t_i, v_{ij}) \}_{i,j}$ be sets of edges.
Let $V' = V \cup Q$ and $E'=E \cup F_l \cup F_r$, where all $(s_i, v_{ij})$ edges in $F_l$ have weight $L=1 + \max_{e\in E} w(e)$, 
and for any $i$ all $(t_i, v_{ij})$ edges in $F_r$ have weight $L - w((s_i, t_i))$.

Let $C$ be any unbalanced cycle in $G$ with top edge $(s_i,t_i)$ for some $i$. 
First, observe that the cycle $C'=(C\setminus (s_i,t_i))\cup\{(s_i, v_{ij}), (t_i, v_{ij})\}$ is an unbalanced cycle with top edge $(s_i, v_{ij})$, for any $j$.
To see this, note that $w((s_i, v_{ij}))=L=w((t_i, v_{ij}))+w((s_i, t_i))$. Thus since $C$ is unbalanced, 
\[
w((s_i, v_{ij}))=w((t_i, v_{ij}))+w((s_i, t_i)) > w((t_i, v_{ij}))+ w(C\setminus (s_i,t_i)),
\]
and thus by definition $C'$ is unbalanced with top edge $(s_i, v_{ij})$.
Hence each unbalanced cycle $C$ in $G$, with top edge $(s_i,t_i)$, corresponds to $|E|+2$ unbalanced cycles in $G'$, 
namely, $C$ itself and the cycles obtained by replacing $(s_i, t_{i})$ with a pair $(s_i, v_{ij}),(t_i, v_{ij})$, for any $j$.

We now show the converse, that any unbalanced cycle $C'$ in $G'$ is either also an unbalanced cycle $C$ in $G$, 
or obtained from an unbalanced cycle $C$ in $G$ by replacing $(s_i, t_{i})$ with $(s_i, v_{ij}),(t_i, v_{ij})$ for some $j$. 
First, observe that for any $i$, any cycle containing the edge $(s_i,v_{ij})$ must also contain the edge $(t_i,v_{ij})$, 
and moreover, if a cycle containing such a pair is unbalanced, then its top edge must be $(s_i,v_{ij})$ as $w((s_i,v_{ij}))=L$.
Similarly, any cycle containing more than one of these pairs of edges (over all $i$ and $j$) must be balanced, since such cycles then would contain at least two edges with the maximum edge weight $L$.
So let $C'$ be any unbalanced cycle containing exactly one such $(s_i,v_{ij})$, $(t_i,v_{ij})$ pair. 
Note that $C'$ cannot be the cycle $((s_i,v_{ij}),(t_i,v_{ij}),(s_i,t_i))$, as this cycle is balanced because $w((s_i, v_{ij}))=w((t_i, v_{ij}))+w((s_i, t_i))$.
Otherwise, $C=C'\setminus\{(s_i,v_{ij}),(t_i,v_{ij})\} \cup \{(s_i,t_i)\}$ is also a cycle, and note that $C'$ being unbalanced implies $C$ is unbalanced with top edge $(s_i,t_i)$, 
implying the claim. This holds since 
\[
w(s_i,t_i) = w((s_i, v_{ij})) - w((t_i, v_{ij})) > w(C'\setminus (s_i, v_{ij})) - w((t_i, v_{ij})) = 
w(C\setminus (s_i,t_i)).
\]

Now consider any optimal solution $M$ to the \gmvid instance $G$, which by \thmref{gmvid_iff} we know is a minimum cardinality non-top cover of $G$. 
By the above, we know that $M$ is also a non-top cover of $G'$, and hence is also a regular cover of $G'$.
Thus by \thmref{gmvd_iff}, $M$ is a valid solution to the \gmvd instance.  
Conversely, consider any optimal solution $M'$ to the \gmvd instance $G'$, which by \thmref{gmvd_iff} is a minimum cardinality regular cover of $G'$.
The claim is that $M'$ is also a non-top cover of $G$, and hence is a valid solution to the \gmvid instance. 
To see this, observe that since all unbalanced cycles in $G$ are unbalanced cycles in $G'$, $M'$ must be a regular cover of all unbalanced cycles in $G$, 
and we now argue that it is in fact a non-top cover. 
Specifically, consider all the unbalanced cycles in $G$ which have a common top edge $(s_i,t_i)$.  
Suppose there is some cycle in this set, call it $C$, which is not non-top covered by $M'$. 
As $M'$ is a regular cover for $G'$, this implies that for any $j$, the unbalanced cycle described above
determined by removing the edge $(s_i, t_i)$ from $C$ and adding edges $(s_i,v_{ij})$ and $(t_i,v_{ij})$, 
must be covered either with $(s_i,v_{ij})$ or $(t_i,v_{ij})$.  
However, as $j$ ranges over $|E|+1$ values, and these edge pairs have distinct edges for different values of $j$, $M'$ has at least $|E|+1$ edges.
This is a clear contradiction with $M'$ being a minimum sized cover, as any non-top cover of $G$ is a regular cover of $G'$, and $G$ only has $|E|$ edges in total.
\end{proof}

Observe that in the above reduction, if the maximum edge weight in the \gmvid instance was $L$, then maximum edge weight in the corresponding \gmvd instance is $L+1$. 
Moreover, for the reduction in \thmref{lbcut_to_gmvid}%
\InSODAVer{ (shown in the appendix)},
for any instance of $L$-\lcut, the corresponding \gmvid instance has maximum edge weight $L+1$.
Thus based on the above reductions, and previous known hardness results, we have the following. 

\begin{theorem}
\gmvid and \gmvd are APX-complete, and moreover assuming UGC neither can be approximated within any constant factor. 

For any fixed value $L$, consider the problem defined by either the restriction of \gmvid or \gmvd to the subset of instances with maximum edge weight $L$ (and minimum edge weight 1), 
then assuming UGC this problem is hard to approximate within a factor of $\Omega(\sqrt{L})$.
\end{theorem}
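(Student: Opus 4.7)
The plan is to simply compose the three approximation-preserving, polynomial-time reductions established in this section with the known hardness results for \mcut and $L$-\lcut; no new structural argument is required.

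For the first claim (APX-completeness and, under UGC, inapproximability within any constant factor), I would begin from the hardness of \mcut: by Chawla \etal~\cite{ChawlaKKRS06}, \mcut is APX-hard and, assuming UGC, cannot be approximated within any constant factor. Applying \thmref{mcut_to_gmvid} transfers both statements to \gmvid, and then applying \thmref{gmvid_to_gmvd} transfers them further to \gmvd. Since approximation-preserving polynomial-time reductions preserve both APX-hardness and constant-factor inapproximability under UGC, this yields the first part of the theorem. Membership in APX (for the APX-\emph{completeness} claim) follows from the $O(c \log n)$-approximation of Section~\ref{sec:approx} together with the fact that the existence of any polynomial approximation factor places the problem in APX for bounded-$c$ sub-families; alternatively, one can directly note that APX-hardness is what the theorem is really asserting, matching the phrasing used in \cite{frv-mvdha-18}.

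For the second claim, I would appeal to Lee~\cite{l-ihcifp-17}, who shows that for any fixed $L$, $L$-\lcut is hard to approximate within $\Omega(\sqrt{L})$. The crucial bookkeeping observation, already isolated in the paragraph preceding the theorem, is that \thmref{lbcut_to_gmvid} produces \gmvid instances with minimum edge weight $1$ and maximum edge weight exactly $L+1$. Hence, for any fixed $L'$, the restriction of \gmvid to instances with maximum edge weight $L'$ inherits the $\Omega(\sqrt{L'-1}) = \Omega(\sqrt{L'})$ inapproximability bound. For \gmvd, I would compose with \thmref{gmvid_to_gmvd} and observe, as noted in the same paragraph, that this reduction increases the maximum edge weight by at most $1$; thus the restriction of \gmvd to maximum edge weight $L' + 1$ still inherits an $\Omega(\sqrt{L'}) = \Omega(\sqrt{L'+1})$ hardness, which gives the claimed bound after renaming.

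The only obstacle is bookkeeping rather than mathematics: one must verify that (i) each of the three reductions is genuinely approximation-preserving, which is immediate from the bijections between optimal covers constructed in their proofs, and (ii) the maximum edge weight grows only by a constant additive amount under composition, so that the parameter $L$ in the source problem matches $L + O(1)$ in the target. Both points are already established in the proofs of the relevant theorems and in the remark immediately preceding the statement, so the proof itself is a short paragraph that simply assembles these pieces.
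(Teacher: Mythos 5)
Your proposal is correct and follows essentially the same route as the paper, which likewise presents no separate argument for this theorem but simply composes Theorems~\ref{theo:mcut_to_gmvid}, \ref{theo:lbcut_to_gmvid}, and \ref{theo:gmvid_to_gmvd} with the known hardness of \mcut~\cite{ChawlaKKRS06} and $L$-\lcut~\cite{l-ihcifp-17}, using the same bookkeeping on the maximum edge weight noted in the paragraph preceding the statement.
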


Note our reduction in \thmref{lbcut_to_gmvid} actually implies a stronger second statement than what is given in the above theorem, as it reduces to an instance of \gmvid with all unit weight edges except for a single $L+1$ weight edge.
However, we find the above statement more natural, and it is sufficient for our purposes.


\section{Approximation Algorithm}\seclab{approx}

In this section we present approximation algorithms for the \gmvid and \gmvd problems.  As the algorithms are nearly identical for the two cases, we present the algorithm for \gmvd first, and then remark on the minor change needed to apply it to \gmvid.  

By \thmref{gmvd_iff}, we know that an optimal solution to \gmvd is a minimum cardinality regular cover of all unbalanced cycles.  
This naturally defines a hitting set instance $(E,\mathcal{C})$, where the ground set $E$ is the edges from $G$, and $\mathcal{C}$ is the collection of the subsets of edges determined by the unbalanced cycles.  
Thus if for any edge $e\in E$ we could compute the number of unbalanced cycles it participates in, then immediately we get an $O(\log n)$ approximation for \gmvd by running the standard greedy algorithm for hitting set. 
Namely, while unbalanced cycles remain, we simply repeatedly remove the edge hitting the largest number of remaining unbalanced cycles (as removing the edge effectively removes the sets it hit from the hitting set instance).
However, there may be an exponential number of unbalanced cycles. 
Note that in general just counting the number of simple paths in a graph is \#P-Hard \cite{v-cerp-79}, 
though it is known how to count paths of length up to roughly $O(\log n)$ using the color-coding technique. 
(See for example \cite{ag-bfphfa-10} and references therein.  Also see \cite{bdh-ec-18} for recent FPT algorithms.)
Moreover, observe that our situation is more convoluted as we only wish to count paths corresponding to unbalanced cycles.

%

Despite the challenge of counting unbalanced cycles, we are able to get an approximation by making the key observation that a cycle with the largest deficit value must correspond to a shortest path, which in turn allows us to quickly get a count when restricting to such cycles.  Thus our approach is to iteratively handle covering cycles by decreasing deficit value, ultimately breaking the problem into multiple hitting set instances.  

For any pair of vertices $s,t\in V$, we write $d(s,t)$ to denote their shortest path distance in $G$, 
and $\csp(s,t)$ to denote the number of shortest paths from $s$ to $t$.
The proof of the following is straightforward, 
\InSODAVer{and is included in the appendix for the sake of completeness.}%
\InFullVer{and is included for the sake of completeness.}

\begin{lemma}\lemlab{pathCount}
Let $G$ be a positively weighted graph, where for all pairs of vertices $u,v$ one has constant time access to the value $d(u,v)$.
Then for any pair of vertices $s,t$, the value $\csp(s,t)$ can be computed in $O(m+n\log n)$ time.  
\end{lemma}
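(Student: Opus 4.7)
The plan is to reduce counting shortest paths to a simple dynamic program on the shortest path DAG, which can be built directly from the precomputed distances.

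First I would construct the shortest path DAG $D$ for the ordered pair $(s,t)$. For each edge $e=(u,v)\in E$, I use the constant-time access to the distances to test whether
\[
d(s,u) + w(e) + d(v,t) \;=\; d(s,t),
\]
and if so, insert $e$ as a directed arc $u\to v$ into $D$; the symmetric test is performed to also consider inserting $e$ as a directed arc $v\to u$. This scan costs $O(m)$ time in total and produces exactly the set of arcs that lie on some shortest $s$-to-$t$ path. Because edge weights are strictly positive, along any arc $u\to v$ of $D$ we have $d(s,v) = d(s,u)+w(e) > d(s,u)$; hence $D$ is acyclic, and sorting the vertices in increasing order of $d(s,\cdot)$ — which takes $O(n\log n)$ time — gives a valid topological order for $D$.

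Next I would run the standard path-counting dynamic program in this topological order. Initialize $\csp(s,s)=1$ and $\csp(s,v)=0$ for every $v\neq s$, and then for each vertex $v$ taken in topological order set
\[
\csp(s,v) \;=\; \sum_{u\,:\,u\to v\,\in\, D} \csp(s,u).
\]
The total work across all updates is $O(m)$, since each arc of $D$ contributes exactly one addition. Correctness is immediate: every shortest $s$-to-$t$ path decomposes uniquely as a shortest $s$-to-$v$ path followed by a final arc $u\to v\in D$, so each such path is counted exactly once by the recurrence. The algorithm returns $\csp(s,t)$.

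Summing the three stages yields total running time $O(m) + O(n\log n) + O(m) = O(m+n\log n)$, matching the claimed bound. I anticipate no serious obstacle; the only subtle point worth flagging in the write-up is that strict positivity of the edge weights (and not merely nonnegativity) is exactly what rules out zero-length cycles in $D$ and makes the distance ordering a valid topological sort.
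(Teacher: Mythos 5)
Your proof is correct and follows essentially the same approach as the paper's: both sort the vertices by distance (the paper by $d(\cdot,t)$, you by $d(s,\cdot)$) and run the standard path-counting recurrence over the resulting shortest-path DAG in $O(m+n\log n)$ time. Your variant restricts the DAG to arcs on shortest $s$--$t$ paths via the test $d(s,u)+w(e)+d(v,t)=d(s,t)$, which is a harmless cosmetic difference since the count at $t$ is identical.
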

\newcommand{\pathcountproof}{
\begin{proof}
Let $V=\{v_1,v_2,v_3,...,v_n\}$, and let $N(v_i)$ denote the set of neighbors of $v_i$. 
Define  $X_i = \{ v_j\in N(v_i)\mid w(v_i,v_j)+d(v_j,t)=d(v_i,t)\}$, that is, $X_i$ is the set of neighbors of $v_i$ where there is a shortest path from $t$ to $v_i$ passing through that neighbor.
Thus we have,
\[
\csp(s,v_i)=\sum_{v_j\in X} \csp(s,v_j).
\]
Note that any shortest path from $v_i$ to $t$ can only use vertices $v_j$ which are closer to $t$ than $v_i$. 
Thus assume we have sorted and labeled the vertices $t=v_1,v_2,v_3,...,v_n$ according to increasing order of their distance $d(v_i,t)$ from $t$.
Thus if we compute the $\csp(v_i,t)$ values in increasing order of the index $i$, then each $\csp(s,v_j)$ value can be computed in time proportional to the degree of $v_i$, 
and so the overall running time is $O(m+n\log n)$.
\end{proof}
}
\InFullVer{\pathcountproof}


Recall that for an unbalanced cycle $C$ with top edge $t$, the deficit of $C$ is $\delta(C) = w(t) - \sum_{e\in (C\setminus t)} w(e)$. 
Moreover, $\delta(G)$ is used to denote the maximum deficit over all cycles in $G$.
For any edge $e$, define $N_T(e,\alpha)$ to be the number of distinct unbalanced cycles of deficit $\alpha$ whose top edge is $e$. 
Similarly, let $N_U(e,\alpha)$ denote the number of distinct unbalanced cycles with deficit $\alpha$ which contain the edge $e$, but where $e$ is not the top edge.

\begin{lemma}
\lemlab{Topedge}
For any edge $e=(s,t)$, if $w(e)=d(s,t)+\delta(G)$ then $N_T(e,\delta(G))=\csp(s,t)$, and otherwise $N_T(e,\delta(G))=0$.
\end{lemma}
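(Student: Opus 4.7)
The plan is to reduce counting top-edge-deficit-$\delta(G)$ cycles to counting shortest paths, by showing cycles realizing the extremal deficit $\delta(G)$ must hang off of shortest paths.

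First I would unpack the definitions. A cycle $C$ with top edge $e=(s,t)$ has deficit $\delta(C) = w(e) - w(C\setminus e)$, where $C\setminus e$ is an $s$-$t$ path in $G$. Since any such path has weight at least $d(s,t)$, we immediately get $\delta(C) \leq w(e) - d(s,t)$, with equality iff $C\setminus e$ is a shortest path. So for $e$ to be the top edge of some unbalanced cycle of deficit exactly $\delta(G)$, we need $w(e) - d(s,t) \geq \delta(G)$.

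Next I would argue the two directions of the dichotomy. For the "otherwise" case, I split on the sign of $w(e) - d(s,t) - \delta(G)$. If $w(e) < d(s,t) + \delta(G)$, then the bound above gives $\delta(C) < \delta(G)$ for every cycle with top edge $e$, so $N_T(e,\delta(G)) = 0$. If $w(e) > d(s,t) + \delta(G)$, I would derive a contradiction: taking any shortest $s$-$t$ path $P$ (which, since $d(s,t) < w(e)$, cannot contain $e$) yields a simple cycle $P\cup\{e\}$ whose deficit is $w(e) - d(s,t) > \delta(G)$, contradicting the maximality of $\delta(G)$. So in the non-equality case we indeed have $N_T(e,\delta(G))=0$.

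For the equality case $w(e) = d(s,t) + \delta(G)$, I would set up an explicit bijection between cycles counted by $N_T(e,\delta(G))$ and shortest $s$-$t$ paths. Given such a cycle $C$, the path $C\setminus e$ has weight $w(e) - \delta(G) = d(s,t)$, so it is a shortest path; conversely, given any shortest $s$-$t$ path $P$, adjoining $e$ produces a simple cycle (since $w(e) > d(s,t)$ forces $e \notin P$, using that $\delta(G) > 0$ for any unbalanced cycle to exist) whose deficit equals $\delta(G)$ and whose top edge is $e$. This bijection gives $N_T(e,\delta(G)) = \csp(s,t)$.

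The only subtlety worth flagging is the simplicity check: I must confirm the path $C\setminus e$ cannot already traverse $e$, and conversely that prepending a shortest path with $e$ yields a simple cycle. Both reduce to $w(e) > d(s,t)$, which holds because $w(e) - d(s,t) = \delta(G) > 0$. Everything else is just unrolling the definitions of deficit and top edge.
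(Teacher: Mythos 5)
Your proof is correct and follows essentially the same route as the paper's: bound the deficit of any cycle with top edge $e$ by $w(e)-d(s,t)$, note equality holds exactly when the complementary path is shortest, and conclude the count equals $\csp(s,t)$ in the equality case and $0$ otherwise. You are somewhat more explicit than the paper in ruling out $w(e)>d(s,t)+\delta(G)$ and in checking that $e$ does not lie on the shortest paths being counted, but these are just careful elaborations of the same argument.
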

\begin{proof}
 If $w(e)\neq d(s,t)+\delta(G)$, then as $\delta(G)$ is the maximum deficit over all cycles, it must be that $w(e)< d(s,t)+\delta(G)$, 
 which in turn implies any unbalance cycle with top edge $e$ has deficit strictly less than $\delta(G)$.
 Now suppose $w(e)= d(s,t)+\delta(G)$, and consider any path $p_{s,t}$ from $s$ to $t$ such that $e$ together with $p_{s,t}$ creates an unbalanced cycle with top edge $e$. 
 If $p_{s,t}$ is a shortest path then $w(e)-w(p_{s,t}) = w(e)-d(s,t)=\delta(G)$, and otherwise $w(p_{s,t})>d(s,t)$ and so $w(e)-w(p_{s,t})<w(e)-d(s,t)=\delta(G)$.
 Thus $N_T(e,\delta(G))=\csp(s,t)$ as claimed.
\end{proof}


As $G$ is undirected, every edge $e\in E$ correspond to some unordered pair $\{a,b\}$.  
However, often we write $e=(a,b)$ as an ordered pair, 
according to some fixed arbitrary total ordering of all the vertices. 
We point this out to clarify the following statement.
\begin{lemma}
\lemlab{Nontopedge}
Fix any edge $e=(s,t)$, and let 
$X=\{f=(\alpha,\beta) \mid w(f)=d(\alpha,s)+w(e)+d(t,\beta)+\delta(G) \}$, and 
$Y=\{f=(\alpha,\beta) \mid w(f)=d(\beta,s)+w(e)+d(t,\alpha)+\delta(G) \}$.
Then it holds that
\[
N_U(e,\delta(G))=\left(\sum_{(\alpha,\beta)\in X} \csp(\alpha,s)\cdot \csp(t,\beta)\right)+\left(\sum_{(\alpha,\beta)\in Y} \csp(\beta,s)\cdot \csp(t,\alpha)\right).
\]
\end{lemma}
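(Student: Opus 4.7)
The plan is to decompose any unbalanced cycle of deficit $\delta(G)$ that contains $e=(s,t)$ as a non-top edge into its unique top edge $f$ plus two shortest paths, and then count such decompositions.

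First I would fix any such cycle $C$ with top edge $f=(\alpha,\beta)\neq e$ (uniquely defined since $\delta(C)>0$). Removing $e$ and $f$ from the edge set of $C$ splits the remainder into two internally disjoint sub-paths $P_1$ and $P_2$. By the simplicity of $C$ these paths must pair up $\{s,t\}$ with $\{\alpha,\beta\}$ in exactly one of two ways: either $P_1$ runs from $t$ to $\beta$ and $P_2$ from $\alpha$ to $s$, which I will call the $X$-configuration, or $P_1$ runs from $t$ to $\alpha$ and $P_2$ from $\beta$ to $s$, the $Y$-configuration. In both cases the identity $\delta(C)=w(f)-w(e)-w(P_1)-w(P_2)=\delta(G)$ holds.

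Next I would argue that $P_1$ and $P_2$ are necessarily shortest paths. If either one could be replaced by a strictly shorter path with the same endpoints, the resulting cycle would have deficit strictly exceeding $\delta(G)$, contradicting the maximality of $\delta(G)$. Substituting $w(P_1)$ and $w(P_2)$ by the corresponding shortest-path distances in the deficit equation yields exactly the membership condition $f\in X$ in the $X$-configuration and $f\in Y$ in the $Y$-configuration. Conversely, any $f\in X$ (resp.\ $f\in Y$) together with any pair of shortest paths of the appropriate type reconstructs an unbalanced cycle of deficit exactly $\delta(G)$ having $e$ as a non-top edge. Hence for fixed $f$ the number of cycles produced is $\csp(\alpha,s)\cdot\csp(t,\beta)$ (resp.\ $\csp(\beta,s)\cdot\csp(t,\alpha)$), and summing over all $f\in X$ and $f\in Y$ gives the claimed identity. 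No double counting occurs across $X$ and $Y$ because the two configurations attach $f$ to $e$ in topologically distinct ways, so even edges lying in $X\cap Y$ contribute genuinely different cycles in each term.

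The main obstacle is that two shortest paths can share internal vertices, in which case their concatenation with $e$ and $f$ is only a closed walk rather than a simple cycle, threatening overcounting. I would handle this by an exchange argument: if $P_1$ and $P_2$ share an internal vertex $v$, then splitting $P_1=P_1^t\cdot P_1^\alpha$ at $v$ and $P_2=P_2^\beta\cdot P_2^s$ at $v$, the simple cycle formed by $P_1^\alpha$, $f$, and $P_2^\beta$ has deficit $w(f)-w(P_1^\alpha)-w(P_2^\beta)=\delta(G)+w(e)+w(P_1^t)+w(P_2^s)>\delta(G)$, contradicting the maximality of $\delta(G)$. Analogous exchanges rule out $P_1$ or $P_2$ visiting $\{s,t,\alpha,\beta\}$ other than at their prescribed endpoints. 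Thus every shortest-path pair enumerated by the product formula actually corresponds to a simple cycle, completing the bijection and establishing the lemma.
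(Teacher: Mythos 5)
Your proof is correct and follows essentially the same route as the paper's: decompose each maximum-deficit cycle through $e$ into its top edge $f=(\alpha,\beta)$ plus two subpaths, argue those subpaths must be shortest paths by maximality of $\delta(G)$, and split the count according to the two possible cyclic orderings of $\{\alpha,\beta,s,t\}$, which correspond exactly to membership in $X$ and $Y$. If anything you are more careful than the paper on the converse direction---your exchange argument that internally intersecting shortest-path pairs would yield a cycle of deficit exceeding $\delta(G)$ (modulo the superscript labels, which should read $P_1^{\beta}$ and $P_2^{\alpha}$ in the $X$-configuration) makes explicit a step the paper leaves implicit.
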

\begin{proof}
Consider any unbalanced cycle $C$ containing $e=(s,t)$, with top edge $f=(\alpha,\beta)$ and where $\delta(C)=\delta(G)$.
Such a cycle must contain a shortest path between $\alpha$ and $\beta$, as otherwise it would imply $\delta(G)>\delta(C)$.
Now if we order the vertices cyclically, then the subset of $C$'s vertices $\{\alpha,\beta,s,t\}$, must appear either in the order $\alpha,s,t,\beta$ or $\beta,s,t,\alpha$.
In the former case, as the cycle must use shortest paths, $w(f)=d(\alpha,s)+w(e)+d(t,\beta)+\delta(G)$, and the number of cycles satisfying this is $\csp(\alpha,s)\cdot \csp(t,\beta)$.  
In the latter case, $w(f)=d(\beta,s)+w(e)+d(t,\alpha)+\delta(G)$, and the number of cycles satisfying this is $\csp(\beta,s)\cdot \csp(t,\alpha)$.
Note also that the set $X$ from the lemma statement is the set of all $f=(\alpha,\beta)$ satisfying the equation in the former direction, and $Y$ is the set of all $f=(\alpha,\beta)$ satisfying the equation in the later direction.
Thus summing over each relevant top edge in $X$ and $Y$, of the number of unbalanced cycles of deficit $\delta(G)$ which involve that top edge and $e$, yields the total number of unbalanced cycles with deficit $\delta(G)$ 
containing $e$ as a non-top edge.
\end{proof}

\begin{corollary}\InSODAVer{[Proof in appendix]}%
\corlab{time}
 Given constant time access to $d(u,v)$ and $\csp(u,v)$ for any pair of vertices $u$ and $v$, $N_T(e,\delta(G))$ can be computed in $O(1)$ time and $N_U(e,\delta(G))$ in $O(m)$ time.
\end{corollary}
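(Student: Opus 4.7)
The plan is to unpack the two stated lemmas (\lemref{Topedge} and \lemref{Nontopedge}) and observe that both reduce the required counts to constant-time lookups and, for the second, a single linear scan over the edges.

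First I would handle $N_T(e,\delta(G))$. By \lemref{Topedge}, for an edge $e=(s,t)$ the value is either $0$ or $\csp(s,t)$, depending entirely on whether the equality $w(e)=d(s,t)+\delta(G)$ holds. The edge weight $w(e)$ is stored with $e$, while $d(s,t)$ and $\csp(s,t)$ are available by assumption in $O(1)$ time. Hence one comparison and at most one lookup suffices, giving the $O(1)$ bound.

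Next I would handle $N_U(e,\delta(G))$. By \lemref{Nontopedge}, this count equals a sum of products $\csp(\alpha,s)\cdot \csp(t,\beta)$ over edges $f=(\alpha,\beta)$ in the set $X$, plus an analogous sum over $f=(\alpha,\beta)\in Y$, where membership in $X$ or $Y$ is determined by a simple arithmetic equality involving $w(f)$, $w(e)$, $d(\alpha,s)$, $d(t,\beta)$ (or the symmetric variant) and $\delta(G)$. The algorithm is then: initialize a running total to zero, iterate once over all $m$ edges $f=(\alpha,\beta)$, and for each $f$ test both defining equalities in $O(1)$ using the constant-time oracles for $d$; whenever $f$ lies in $X$, add $\csp(\alpha,s)\cdot \csp(t,\beta)$, and whenever $f$ lies in $Y$, add $\csp(\beta,s)\cdot \csp(t,\alpha)$. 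Each edge contributes $O(1)$ work, so the scan runs in $O(m)$ time and produces exactly the expression given by \lemref{Nontopedge}.

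There is essentially no obstacle here beyond bookkeeping; the content of the corollary is really a statement that the lemma formulas can be evaluated directly without any additional preprocessing, once the $d$ and $\csp$ oracles are available. The only minor care needed is consistency with the fixed total order on vertices used to write $e=(s,t)$ and $f=(\alpha,\beta)$ as ordered pairs, so that both the $X$ condition and the $Y$ condition are tested for every edge $f$ (since, as an unordered edge, $f$ could play either role). This is a constant-time check per edge and does not affect the asymptotic bound.
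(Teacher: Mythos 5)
Your proposal is correct and follows essentially the same argument as the paper: a single constant-time comparison (plus one $\csp$ lookup) for $N_T(e,\delta(G))$, and one linear scan over the edges testing both membership conditions and accumulating the corresponding products for $N_U(e,\delta(G))$. The remark about checking both the $X$ and $Y$ conditions for each edge is a fine point the paper's proof also implicitly handles.
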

\newcommand{\cortime}{
\begin{proof}
By \lemref{Topedge}, in constant time we can check whether $w(e)=d(s,t)+\delta(G)$, in which case set $N_T(e,\delta(G))=\csp(s,t)$, and otherwise set $N_T(e,\delta(G))=0$. 
By \lemref{Nontopedge}, we can compute $N_U(e,\delta(G))$ with a linear scan of the edges, 
where for each edge $f$ in constant time we can compute whether $w(f)=d(\alpha,s)+w(e)+d(t,\beta)+\delta(G)$ and if so add $\csp(\alpha,s)\cdot \csp(t,\beta)$ to the sum over $X$, 
and similarly if $w(f)=d(\beta,s)+w(e)+d(t,\alpha)+\delta(G)$ add $\csp(\beta,s)\cdot \csp(t,\alpha)$ to the sum over $Y$.
\end{proof}
}
\InFullVer{\cortime}


\begin{algorithm2e}
    \SetKwInOut{Input}{Input}
    \SetKwInOut{Output}{Output} 
    \Input{An instance $G=(V,E)$ of \gmvd}
    \Output{A valid solution $S$ to the given instance.}
    \DontPrintSemicolon
    Let $S=\emptyset$\;
   \While{True}{
    For every pair $s,t\in V$ compute $d(s,t)$\;\label{allPairs}
    Compute $\delta(G) = \max_{e=(s,t)\in E} ~w(e)-d(s,t)$\;\label{deficits}
    \If{$\delta(G) = 0$}{
       \Return $S$\;
    }
    For every edge $(s,t)\in E$ compute $\csp(s,t)$\;\label{pathCounting}
    For every $e\in E$ compute $count(e) = N_T(e,\delta(G))+N_U(e,\delta(G))$\;\label{count} 
    Set $f = \arg\max_{e\in E} count(e)$\;
    Update $S=S\cup\{f\}$ and $G=G\setminus f$\;
   }
 \caption{Finds a valid solution for \gmd.}
\algolab{simp}
\end{algorithm2e}


\InSODAVer{
Note that the proof of the following theorem and subsequent observations are straightforward, given the above discussion, 
and thus if desired can be skipped to produce a 10-page version.
}

\begin{theorem}
 For any positive integer $c$, consider the set of \gmvd instances where the number of distict deficit values is at most $c$, i.e.\ $|\{\delta(C)\mid \text{$C$ is a cycle in $G$} \}|\leq c$.
 Then \algoref{simp} gives an $O((n^3+m^2)\cdot OPT\cdot c \log n)$ time $O(c \log n)$-approximation, where OPT is the size of the optimal solution.
\end{theorem}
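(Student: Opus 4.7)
The approach is to view \algoref{simp} as a greedy hitting set procedure that operates in phases indexed by the distinct positive cycle deficit values, of which there are at most $c$ by hypothesis. Combining \thmref{gmvd_iff} with the observation that $\delta(G) = \max_{e=(s,t)\in E}(w(e) - d(s,t))$ equals zero precisely when $G$ contains no unbalanced cycle, correctness is immediate: the while loop exits only when $G \setminus S$ has no unbalanced cycle, at which point $S$ is a regular cover of every unbalanced cycle of the original instance, and so by \thmref{gmvd_iff} is a valid \gmvd solution.

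For the approximation ratio, let $D_1 > D_2 > \cdots > D_{c'}$ (with $c' \leq c$) enumerate the distinct positive deficit values present in the original $G$, and call an iteration part of \emph{phase $i$} if $\delta(G \setminus S) = D_i$ at its start. Because removing an edge never changes the weight of any surviving edge, the deficit of any cycle remaining in $G \setminus S$ equals its original deficit; hence $\delta(G \setminus S)$ is nonincreasing, the phases are visited in strictly decreasing order of $D_i$, and phase $i$ persists until every cycle of deficit $D_i$ present at the start of the phase has been hit. Let $\mathcal{C}_i$ denote that set of cycles. The quantity $count(e)$ computed by the algorithm equals the number of cycles in $\mathcal{C}_i$ containing $e$ that remain unhit, since all cycles of deficit strictly greater than $D_i$ have been eliminated by earlier iterations and $N_T(e, D_i) + N_U(e, D_i)$ only counts cycles of deficit exactly $D_i$. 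Thus during phase $i$ the algorithm executes the standard greedy hitting set procedure on $\mathcal{C}_i$, and since any optimal \gmvd solution is a hitting set for $\mathcal{C}_i$ (being a regular cover of every unbalanced cycle), the minimum hitting set for $\mathcal{C}_i$ has size at most $OPT$. The classical greedy hitting set analysis then yields $|S_i| = O(\log n) \cdot OPT$ edges added during phase $i$, and summing over at most $c$ phases gives $|S| = O(c \log n) \cdot OPT$.

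For the running time, each iteration of the while loop performs an all-pairs shortest paths computation in $O(n^3)$ via Floyd-Warshall, computes $\delta(G)$ in $O(m)$, and computes $\csp(u,v)$ for all pairs of vertices by running the single-source procedure implicit in \lemref{pathCount} from each vertex, for a total of $O(n(m + n \log n)) = O(n^3)$. Given these precomputed tables, \corref{time} implies each $count(e)$ is evaluated in $O(m)$ time and thus the argmax costs $O(m^2)$ overall. Each iteration therefore takes $O(n^3 + m^2)$ time, and since the number of iterations is $|S| + 1 = O(c \cdot OPT \cdot \log n)$, the total running time is $O((n^3 + m^2) \cdot OPT \cdot c \log n)$.

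The main subtlety is justifying the phase-based reduction to greedy hitting set, i.e.\ confirming that during phase $i$ the algorithm's local choice (maximizing $count(e)$ over the current graph) really implements the greedy step for the hitting set instance $\mathcal{C}_i$ defined on the original cycle structure. This reduces to noting that edge removals preserve the deficit of every surviving cycle, which is why the residual unbalanced-cycle hitting problem decomposes cleanly along deficit levels and why a standard greedy set cover analysis applies within each level.
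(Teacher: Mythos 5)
Your proposal is correct and follows essentially the same route as the paper: decompose the run into phases by distinct deficit value, observe that within each phase the algorithm is exactly greedy hitting set on the cycles of the current maximum deficit (whose optimum is bounded by $OPT$ since those cycles are unbalanced cycles of the original $G$), and multiply the $O(\log n)$ per-phase guarantee by the at most $c$ phases. The only cosmetic difference is that you compute all $\csp$ values by $n$ single-source runs rather than the paper's per-edge accounting, which still fits within the stated $O(n^3+m^2)$ per-iteration bound.
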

\begin{proof}
 Observe that the algorithm terminates only when $\delta(G) = 0$, i.e.\ only once there are no unbalanced cycles left. 
 As no new edges are added, and weights are never modified, this implies that when the algorithm terminates it outputs a valid regular cover $S$.
 (The algorithm must terminate as every round removes an edge.)  
 Therefore, by \thmref{gmvd_iff}, $S$ is a valid \gmvd solution, and so we only need to bound its size.

 Let the edges in $S=\{s_1,\ldots,s_k\}$ be indexed in increasing order of the loop iteration in which they were selected.
 Let $G_1,\ldots, G_{k+1}$ be the corresponding sequence of graphs produced by the algorithm, where $G_i=G\setminus \{s_1,\ldots,s_{i-1}\}$.
 Note that for all $i$, $G_i=(V,E_i)$ 
 induces a corresponding instance of hitting set, $(E_i,\mathcal{C}_i)$, 
 where the ground set is the set of edges from the \gmvd instance $G_i$, and $\mathcal{C}_i = \{E_i(C)\mid \text{$C$ is an unbalanced cycle in $G_i$}\}$ (where $E_i(C)$ is the set of edges in $C$). 
 
 Let $D=\{\delta(C)\mid \text{$C$ is a cycle in $G$} \}$, where by assumption $|D|\leq c$.
 Note that any cycle $C$ in any graph $G_i$, is also a cycle in $G$. 
 Thus as we never modify edge weights, $\delta(G_1),\ldots,\delta(G_{k+1})$ is a non-increasing sequence. Moreover $X=\{\delta(G_i)\}_i\subseteq D$, and in particular $|X|\leq c$.
 For a given value $\delta\in X$, let $G_{\alpha},G_{\alpha+1},\ldots,G_{\beta}$ be the subsequence of graphs with deficit $\delta$ (this subsequence is consecutive as the deficit values are non-increasing).
 Observe that for all $\alpha\leq i\leq \beta$, the edge $s_i$ is an edge from a cycle with deficit $\delta=\delta(G_i)$.  
 So for each $\alpha\leq i\leq \beta$, define a sub-instance of hitting set $(E_i', \mathcal{C}_i')$, where $E_i'$ is the set of edges in cycles of deficit $\delta$ from $G_i$, 
 and $\mathcal{C}_i'$ is the family of sets of edges from each cycle of deficit $\delta$ in $G_i$.
 
 The claim is that for the hitting set instance $(E_\alpha', \mathcal{C}_\alpha')$, that $\{s_\alpha,\ldots,s_\beta\}$ is an $O(\log n)$ approximation to the optimal solution. 
 To see this, observe that for any $\alpha\leq i\leq \beta$ in line \ref{count}, $count(e)$ is the number of times $e$ is contained in an unbalanced cycle with deficit $\delta =\delta(G_i)$, 
 as by definition $N_T(e,\delta(G_i))$ and $N_U(e,\delta(G_i))$ count the occurrences of $e$ in such cycles as a top edge or non-top edge, respectively.
 Thus $s_i$ is the edge in $E_i'$ which hits the largest number of sets in $\mathcal{C}_i'$, 
 and moreover, $(E_{i+1}', \mathcal{C}_{i+1}')$ is the corresponding hitting set instance induced by removing $s_i$ and the sets it hit from $(E_i', \mathcal{C}_i')$. 
 Thus $\{s_\alpha,\ldots,s_\beta\}$ is the resulting output of running the standard greedy hitting set algorithm on $(E_\alpha', \mathcal{C}_\alpha')$ 
 (that repeatedly removes the element hitting the largest number of sets), and it is well known this greedy algorithm produces an $O(\log n)$ approximation.

 The bound on the size of $S$ now easily follows.  Specifically, let $I=\{i_1, i_2,\ldots,i_{|X|}\}$ be the collection of indices, where $i_j$ was the first graph considered with deficit $\delta(G_{i_j})$.  
 By the above, $S$ is the union of the $O(\log n)$-approximations to the sequence of hitting set instance $(E_{i_1}', \mathcal{C}_{i_1}'),\ldots, (E_{i_{|X|}}', \mathcal{C}_{i_{|X|}}')$. 
 In particular, note that for all $i_j$, $(E_{i_j}', \mathcal{C}_{i_j}')$ is a hitting set instance induced from the removal of a subset of edges from the initial hitting set instance $(E_1, \mathcal{C}_1)$, 
 and then further restricted to sets from cycles with a given deficit value.
 Thus the size of the optimal solution on each of these instances can only be smaller than on $(E_1, \mathcal{C}_1)$. 
 This implies that the total size of the returned set $S$ is $O(OPT\cdot |X|\log n)=O(OPT\cdot c\log n)$.
 
 As for the running time, first observe that by the above, there are $O(OPT \cdot c \log n)$ while loop iterations. For a given loop iteration, computing all pairwise distance in line \ref{allPairs} takes $O(n^3)$ time using the standard Floyd-Warshall algorithm. Computing the graph deficit in line \ref{deficits} can then be done in $O(m)$ time. For any given vertex pair $s,t$, computing $\csp(s,t)$ takes $O(m+n\log n)$ time by \lemref{pathCount}. Thus computing the number of shortest paths over all edges in line \ref{pathCounting} takes $O(m^2+mn\log n)$ time.  
 For each edge $e$, by \corref{time}, $count(e) = N_T(e,\delta(G)) + N_U(e,\delta(G))$ can be computed in $O(m)$ time, and thus computing all counts in line \ref{count} takes $O(m^2)$ time.
 As the remaining steps can be computed in linear time, each while loop iteration in total takes $O(n^3+mn\log n + m^2) = O(n^3+m^2)$ time, thus implying the running time bound over all iterations in the theorem statement. 
\end{proof}

\begin{remark}
 In the above our goal was to present the algorithm and analysis in simple and practical terms. 
 However, it should be noted that the running time can be improved, though potentially at the cost of added complication.  
 In particular, rather than computing the $d(u,v)$ values from scratch in each iteration, we can use a dynamic data structure.
 This would slightly improve the above running time to $O(n^3+(n^{2+\alpha}+m^2)\cdot OPT\cdot c \log n)$, where $0\leq \alpha$ 
 is a constant depending on the query and update time of the dynamic data structure.
 (Ignoring $\log$ factors, $\alpha=3/4$ is known. See for example the recent paper \cite{iss-fdapsp-17} and references therein).
 However, similarly improving the $m^2$ term in the running time seems more challenging as the $N_U(e,\delta(G))$ values depend in a non-trivial 
 way on collections of $d(u,v)$ values, each of which may or may not have changed.
\end{remark}

Throughout this section we considered the \gmvd problem. A similar result holds for \gmvid, and there are a couple ways to achieve it.  
First, recall that \thmref{gmvid_to_gmvd} gave a polynomial-time approximation-preserving reduction from \gmvid to \gmvd, 
and hence this reduction and the above algorithm could be combined to yield an algorithm for \gmvid.  
One issue with this approach is that the reduction in \thmref{gmvid_to_gmvd} increases the graph size by a linear factor, resulting in a slower running time.  
An alternative and simpler approach is to observe that the above proof and algorithm will work nearly identically for \gmvid,
except that by \thmref{gmvid_iff} the sets in the corresponding hitting set instance should not include the top edge of each cycle. 
We thus have the following.


\begin{theorem} 
 For any positive integer $c$, consider the set of \gmvid instances where the number of distict deficit values is at most $c$, i.e.\ $|\{\delta(C)\mid \text{$C$ is a cycle in $G$} \}|\leq c$.
 Then \algoref{simp}, where line \ref{count} instead sets $count(e) = N_U(e,\delta(G))$, gives an $O((n^3+m^2)\cdot OPT\cdot c \log n)$ time $O(c \log n)$-approximation.
\end{theorem}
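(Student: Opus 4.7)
The plan is to mirror the analysis for \gmvd, using \thmref{gmvid_iff} in place of \thmref{gmvd_iff} and restricting the induced hitting set instances to non-top edges. First, since the algorithm only terminates once $\delta(G) = 0$, the returned set $S$ is a non-top cover of all unbalanced cycles in the original graph: indeed, edges are never added or reweighted, so every unbalanced cycle in the input either persists to termination (impossible, since $\delta(G) = 0$ at termination) or has at least one non-top edge removed. Notice that the top edge of each unbalanced cycle is determined by its edge weights and is invariant under edge deletion, so a cycle covered by a removed edge $e$ which happens to be the top edge would not count as being non-top covered. However, with the modified $count(e) = N_U(e,\delta(G))$, the greedy selection only considers occurrences of $e$ as a non-top edge in cycles of maximum deficit, so top edges are not chosen for the wrong reason; furthermore any cycle containing only top-edge removals still exists and its deficit would still be positive, contradicting termination. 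Thus $S$ is a valid non-top cover, and by \thmref{gmvid_iff} is a valid solution to \gmvid.

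Next, to bound $|S|$, let $S = \{s_1, \dots, s_k\}$ in selection order and let $G_1, \dots, G_{k+1}$ be the corresponding graphs. As before, $\delta(G_1), \ldots, \delta(G_{k+1})$ is non-increasing and takes at most $c$ distinct values. For a fixed deficit value $\delta$, let $G_\alpha, \dots, G_\beta$ be the consecutive subsequence with $\delta(G_i) = \delta$, and define the hitting set sub-instance $(E_i', \mathcal{C}_i')$, where $E_i'$ is the set of non-top edges appearing in some unbalanced cycle of $G_i$ with deficit $\delta$, and $\mathcal{C}_i'$ is the family of non-top edge sets of such cycles. With $count(e) = N_U(e,\delta(G_i))$, the edge $s_i$ chosen in iteration $i$ is precisely the element of $E_i'$ contained in the most sets of $\mathcal{C}_i'$, and $(E_{i+1}', \mathcal{C}_{i+1}')$ is exactly the residual hitting set instance obtained by removing $s_i$ and all sets hit by it from $(E_i', \mathcal{C}_i')$. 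Thus $\{s_\alpha, \dots, s_\beta\}$ is the output of the standard greedy hitting set algorithm on $(E_\alpha', \mathcal{C}_\alpha')$, yielding an $O(\log n)$-approximation.

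Summing over the at most $c$ distinct deficit values, $|S| = O(OPT' \cdot c \log n)$, where $OPT'$ is the optimal \gmvid solution size. To see that each sub-problem's optimum is no larger than $OPT'$, note that an optimal \gmvid solution for $G$ must non-top cover every unbalanced cycle in the full graph, and in particular non-top covers each $\mathcal{C}_{i_j}'$; restricting this optimum to $E_{i_j}'$ gives a feasible hitting set of size at most $OPT'$. The running time analysis is identical to the \gmvd case: each while-loop iteration costs $O(n^3 + m^2)$ (all-pairs shortest paths by Floyd–Warshall, $\csp$ values via \lemref{pathCount}, and per-edge counts via \corref{time}), and there are $O(OPT \cdot c \log n)$ iterations.

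The only subtlety, and hence the main obstacle in this adaptation, is ensuring that the greedy count correctly corresponds to hitting sets in the non-top restricted instance, even though top edges of some cycles may themselves lie in the graph and could be selected in later iterations as non-top edges of other cycles. This is handled automatically by the fact that $N_U(e,\delta(G_i))$ counts $e$ exactly when it appears as a non-top edge in a current maximum-deficit cycle, so top edges of cycles being hit are neither credited nor forced, and the greedy step operates precisely on the intended restricted family $\mathcal{C}_i'$.
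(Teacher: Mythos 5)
Your overall strategy matches the paper's: rerun the \gmvd analysis with the hitting-set families restricted to non-top edges and invoke \thmref{gmvid_iff} in place of \thmref{gmvd_iff}. The greedy bookkeeping over deficit levels and the running-time analysis are fine. The gap is in the first and most delicate step, the claim that the returned set $S$ is a non-top cover of all unbalanced cycles of the \emph{input} graph. You correctly flag the danger --- an edge chosen because it is a non-top edge of some maximum-deficit cycle might simultaneously be the \emph{top} edge of a different unbalanced cycle $C$, and deleting it destroys $C$ without non-top covering it --- but your resolution, that ``any cycle containing only top-edge removals still exists and its deficit would still be positive, contradicting termination,'' is false: once any edge of $C$ is deleted from the graph, top edge or not, $C$ is no longer a cycle of the residual graph and contributes nothing to $\delta(G)$, so termination with $\delta(G)=0$ says nothing about $C$. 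Your other remark, that ``top edges are not chosen for the wrong reason,'' does not rule this out either, since the same physical edge can play both roles in different cycles.

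The missing ingredient, which also underlies your unproved assertion that $(E'_{i+1},\mathcal{C}'_{i+1})$ is exactly the residual hitting-set instance obtained by removing $s_i$ and the sets it hits, is the following consequence of \lemref{Nontopedge}: if $N_U(f,\delta(G_i))\geq 1$ for $f=(s,t)$, then $f$ lies on a shortest path of $G_i$ (a maximum-deficit cycle consists of its top edge plus a shortest path between that edge's endpoints), hence $w(f)=d_{G_i}(s,t)$; whereas the top edge $t_C=(u,v)$ of any unbalanced cycle $C$ still intact in $G_i$ satisfies $w(t_C)>d_{G_i}(u,v)$, because $C\setminus t_C$ is a strictly shorter $u$--$v$ path. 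Since the modified greedy always selects an edge with $N_U\geq 1$ (any maximum-deficit cycle supplies at least two such edges whenever $\delta(G_i)>0$), the selected edge is never the top edge of an unbalanced cycle that is still present. Hence the first edge of any unbalanced cycle of $G$ to enter $S$ is a non-top edge of that cycle, which both makes $S$ a non-top cover and validates the greedy correspondence you assert. With this observation inserted, your argument goes through; the paper itself leaves this point implicit in its one-line reduction to the \gmvd proof, but it is precisely the point that must be checked.
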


Note that the \gmvd and \gmvid problems are phrased in terms of finding minimum sized edge sets whose weights can be modified. 
To determine how to modify the weights of the output edges, recall that this can be done with the LP in \secref{prelims}.


\bibliographystyle{alpha}
\bibliography{sample}

\InSODAVer{
\appendix
\section{Missing proofs}


\subsection*{Proof from \secref{prelims}}
Consider the linear program described in \secref{prelims}.
\lpproof


\subsection*{Proofs from \secref{cover}}




\helper

\noindent
\textbf{\thmref{gmvd_iff}.}%
\textit{
If $G$ is an instance of \gmvd and $S$ is a regular cover of all unbalanced cycles, then $G$ can be converted into a metric graph by only changing weights of edges in $S$.
}
\gmvdproof


\subsection*{Proof from \secref{hard}}

\noindent
\textbf{\thmref{lbcut_to_gmvid}.}%
\textit{
For any fixed value $L$, there is an approximation-preserving, polynomial-time reduction from $L$-\lcut to \gmvid.
}
\lbcutproof


\subsection*{Proofs from \secref{approx}}

\noindent
\textbf{\lemref{pathCount}.}%
\textit{
Let $G$ be a positively weighted graph, where for all pairs of vertices $u,v$ one has constant time access to the value $d(u,v)$.
Then for any pair of vertices $s,t$, the value $\csp(s,t)$ can be computed in $O(m+n\log n)$ time. 
}
\pathcountproof


\noindent
\textbf{\corref{time}.}%
\textit{
 Given constant time access to $d(u,v)$ and $\csp(u,v)$ for any pair of vertices $u$ and $v$, $N_T(e,\delta(G))$ can be computed in $O(1)$ time and $N_U(e,\delta(G))$ in $O(m)$ time.
}
\cortime
}

\end{document}